\newcommand{\bcomment}[1]{}
\newcommand{\TODO}[1]{}
\newcommand{\calO}{\mathcal{O}}
\newcommand{\define}{\triangleq}
\newcommand{\leql}[1]{\overset{(#1)}{\leq}}
\newcommand{\pth}[1]{\left(#1\right)}
\newcommand{\croc}[1]{\left[#1\right]}
\newcommand{\acc}[1]{\left\{#1\right\}}
\newcommand{\norm}[1]{\left|#1\right|}
\newcommand{\condr}[2]{\left[#1\left|#2\right.\right]}
\newcommand{\ceil}[1]{\left\lceil#1\right\rceil}
\newcommand{\Bpth}[1]{\Big(#1\Big)}
\newcommand{\Bacc}[1]{\Big\{#1\Big\}}
\newcommand{\natS}{\mathbb{N}}
\newcommand{\realS}{\mathbb{R}}
\newcommand{\diffS}{\setminus}
\newcommand{\prob}[1]{P\left[#1\right]}
\newcommand{\E}[1]{\mathbb{E}\left[#1\right]}
\newcommand{\ind}[1]{\mathds{1}\{#1\}}
\newcommand{\eh}{\eta}
\newcommand{\mcE}{\mathcal{E}}
\newcommand{\Ga}{G_a}
\newcommand{\Gb}{G_b}
\newcommand{\Ha}{H_a}
\newcommand{\Hb}{H_b}
\newcommand{\Va}{V_a}
\newcommand{\Vb}{V_b}
\newcommand{\rem}{V \setminus H}
\newcommand{\Ea}{E_a}
\newcommand{\Eb}{E_b}
\newcommand{\comp}[1]{\overline{#1}}
\newcommand{\da}[1]{d_a\left(#1\right)}
\newcommand{\db}[1]{d_b\left(#1\right)}
\newcommand{\dxa}[1]{d_{\overline{a}}\left(#1\right)}
\newcommand{\dxb}[1]{d_{\overline{b}}\left(#1\right)}
\newcommand{\dG}[1]{d_G\left(#1\right)}
\newcommand{\daprime}[1]{d_a^{#1}}
\newcommand{\dxaprime}[1]{d_{\overline{a}}^{#1}}
\newcommand{\dbprime}[1]{d_b^{#1}}
\newcommand{\degseq}{\delta}
\newcommand{\Na}[1]{N_a\left(#1\right)}
\newcommand{\Nb}[1]{N_b\left(#1\right)}
\newcommand{\Sa}[1]{\operatorname{sig}_a(#1)}
\newcommand{\Sb}[1]{\operatorname{sig}_b(#1)}
\newcommand{\SG}[1]{\operatorname{sig}_G(#1)}
\newcommand{\Sx}[2]{\operatorname{sig}_{#1}\left(#2\right)}
\newcommand{\SB}[1]{\operatorname{sig}_B'\left(#1\right)}
\newcommand{\SBa}[1]{\operatorname{sig}_a'\left(#1\right)}
\newcommand{\SBb}[1]{\operatorname{sig}_b'\left(#1\right)}
\newcommand{\vvec}{\mathbf{v}}
\newcommand{\wvec}{\mathbf{w}}
\newcommand{\deltavec}{\mathbf{\delta}}
\newcommand{\pvec}{\mathbf{p}}
\newcommand{\pll}{p_{11}}
\newcommand{\plo}{p_{10}}
\newcommand{\pol}{p_{01}}
\newcommand{\poo}{p_{00}}
\newcommand{\plx}{p_{1*}}
\newcommand{\pox}{p_{0*}}
\newcommand{\pxl}{p_{*1}}
\newcommand{\pxo}{p_{*0}}
\newcommand{\plox}{\frac{p_{10}}{p_{1*}}}
\newcommand{\polx}{\frac{p_{01}}{p_{0*}}}
\newcommand{\qo}{q_0}
\newcommand{\ql}{q_1}
\newcommand{\blue}[1]{\textcolor{blue}{#1}}
\newcommand{\highf}{f}
\newtheorem{theorem}{Theorem}
\newtheorem*{theorem*}{Theorem}
\newtheorem{lemma}[theorem]{Lemma}
\newtheorem{corollary}[theorem]{Corollary}
\newtheorem{definition}{Definition}
\newtheorem{remark}{Remark}
\DeclareMathOperator*{\argmin}{argmin}
\begin{document}
\title{Analysis of a Canonical Labeling Algorithm for the Alignment of Correlated Erdős-Rényi Graphs}
\author[1]{Osman Emre Dai\thanks{oedai@gatech.edu}}
\author[2]{Daniel Cullina\thanks{dcullina@princeton.edu}}
\author[1]{Negar Kiyavash\thanks{negar.kiyavash@ece.gatech.edu}}
\author[3]{Matthias Grossglauser\thanks{matthias.grossglauser@epfl.ch}}
\affil[1]{Georgia Institute of Technology, Department of Industrial \& Systems Engineering}
\affil[2]{Princeton University, Department of Electrical Engineering}
\affil[3]{Georgia Institute of Technology, Department of Electrical and Computer Engineering and Department of Industrial \& Systems Engineering}
\affil[4]{École Polytechnique Fédérale de Lausanne, School of Computer \& Communication Sciences}
\date{}
\maketitle

\begin{abstract}
Graph alignment in two correlated random graphs refers to the task of identifying
the correspondence between vertex sets of the graphs. Recent results have characterized the exact information-theoretic threshold for graph alignment in correlated Erdős-Rényi graphs. However, very little is known about the existence of efficient algorithms to achieve graph alignment without seeds. 

In this work we identify a region in which a straightforward $\calO(n^{11/5} \log n )$-time canonical labeling algorithm, initially introduced in the context of graph isomorphism, succeeds in aligning correlated Erdős-Rényi graphs. The algorithm has two steps. In the first step, all vertices are labeled by their degrees and a trivial minimum distance alignment (i.e., sorting vertices according to their degrees) matches a fixed number of highest degree vertices in the two graphs. Having identified this subset of vertices, the remaining vertices are matched using a alignment algorithm for bipartite graphs.
\end{abstract}

\section{Introduction}

Graph alignment (GA) (also called network reconciliation) refers to a class of computational techniques to identify node correspondences across related networks based on structural information.
GA has applications in a variety of domains, including data fusion \cite{tian:tale,zhang:sapper}, privacy \cite{erkip, community1, community2} and in computational biology \cite{Singh:2008,kuchaiev2010topological,malod2015graal,saraph2014,Aladag:2013}. 
For example, in computational biology, a coarse description of the metabolic machinery of a particular species is via a protein-protein interaction (PPI) network, which essentially captures which protein can react with which other protein in that species.
Across species, the PPI networks tend to be strongly correlated, because evolution transfers metabolic processes from species to species.
Therefore, by identifying correspondences among proteins in different species (so-called {\em orthologs}), one is able to transfer biological knowledge from one species to the other.
However, crucially, the actual proteins tend to be chemically different across species, because random mutations alter these proteins over time without affecting their function.
It is therefore not possible to find correspondences between proteins in different species simply by examining their amino-acid sequences.
GA computes such correspondences by exploiting the correlation across networks in different species.

A similar challenge arises in social networks: suppose a set of users have accounts in several social networks.
It is plausible that their links in these networks would be correlated, in the sense that given $u$ and $v$ are linked
in the first network, it makes it conditionally more likely that they are connected in the second.
This can help network reconciliation (e.g., if one wants to create a single network out of several component networks),
and it can hurt privacy (e.g., by exploiting one public network to de-anonymize a private network whose node identities have been
obfuscated).

While a lot of prior work on GA is heuristic in nature, a clean mathematical treatment of the problem first posits a stochastic model over two random graphs.
One parametrization of this model assumes a generator graph $G$, and then generates two correlated observable graph $G_{a,b}$ by sampling the edge set of $G$ twice, independently.
An equivalent formulation, adopted in this paper, considers a joint distribution that generates both graphs without the assumption of an underlying true graph.
Given this random graph model, we can recover the perfect alignment as the matching of the vertex sets under the assumption that pairs of vertices in one graph tend to be adjacent if and only if their true matches are adjacent in the other graph. This can be considered as a generalization of the problem of identifying graph isomorphisms, which corresponds to matching graphs where edges are not just likely but certain to be the same in both graphs.


In this paper, to the best of our knowledge, we present the first algorithm that possesses the following advantages: (i) it is seedless, i.e., it does not require side-information in the form of pre-aligned pairs to operate; (ii) under a well-studied stochastic graph model, the regime where the algorithm achieves perfect alignment can be characterized; and (iii) the algorithm incurs an $O(n^{11/5} \log n )$ computational cost  in the size of the graph, enabling the alignment of large networks.

The algorithm proceeds in two phases: during the first phase, for a fixed threshold parameter $h$, the $h$ highest-degree vertices in both graphs are matched in the natural way (highest degree to highest, second-highest to second-highest, and so forth). For convenience, we call these `anchors'.
In the second phase, each remaining vertex is labeled with a binary vector of length $h$ that encodes its adjacency to the set of anchor vertices. 
The final alignment is then generated via a minimum-distance matching over the labels in both graphs.
Note that the second phase is equivalent to the matching of two bipartite graphs given the matching of one of their partite sets.

We evaluate the performance of the algorithm on the correlated random graph model of asymptotic size and determine conditions for the reliable performance of the algorithm.
This result relies on an achievability result on the matching of bipartite graphs as an intermediary step, which is of independent interest.

The remainder of the paper is organized as follows: In \hyperref[relatedWork]{Section \ref*{relatedWork}}, we survey the relevant prior work on the problem of graph matching in large networks. In \hyperref[model]{Section \ref*{model}}, we introduce our notation, formalize the problem, and present our model of correlated graphs and correlated bigraphs. In \hyperref[results]{Section \ref*{results}}, we state our main result, present the conditions on the successful performance of the two steps of the algorithm, and finally provide the proof for our main result. Then in \hyperref[simulations]{Section \ref*{simulations}}, we compare the algorithm with other known algorithms from the literature. In \hyperref[conclusion]{Section \ref*{conclusion}}, we suggest some directions for future work. We present the proofs of all of our intermediary results in the appendix.

\section{Related Work}
\label{relatedWork}

The graph alignment problem has been studied in a diverse set of fields and with different applications in mind.
First, a line of work focuses on GM as a mode of attack on private information.
An adversary tries to de-anonymize a network that is publicly released, but where node identities have been deliberately obfuscated.
Obviously, there are also legitimate applications for GM: for example, similar approaches have been proposed to reconcile databases, by aligning their database schema \cite{tian:tale,zhang:sapper}.
One such scenario considers the possibility of manipulating the network prior to its release, such that an identifiable sub-network is created \cite{whereforeArtThou} through a form of ``graph steganography''.
In another scenario, the attacker uses queries to attempt to locate the node of a given user \cite{erkip}.
Yet other scenarios assume the availability of some kind of side information, such as community assignments \cite{community1,community2} or subsets of identified vertices (seeds) \cite{seeds,netflix,witness, mossel2019seeded}.
One important method making use of such side information is the so-called percolation method, which starts from the seeds vertices to iteratively grow the alignment until the whole graph is identified \cite{percolation1}, \cite{percolation2}.

In computational biology, PPI network alignment algorithms typically rely on both structural and biological information (in particular, the amino acid sequences of the proteins).
Many heuristics have been developed, which typically try to minimize a cost function that is a convex combination of structural similarity and of sequence similarity.
A few prominent examples include IsoRank \cite{Singh:2008}, the GRAAL family
\cite{kuchaiev2010topological,malod2015graal}, MAGNA and its successor MAGNA++
\cite{saraph2014}, and SPINAL
\cite{Aladag:2013}.
All of these methods are purely heuristic in nature, and have been evaluated without the availability of a ground truth.
Their relative merits are a matter of ongoing debate in the computational biology community.

We show in this paper that efficient graph alignment is possible without any side information.
Henderson et al. propose one such method that performs alignment based on expressions of structural features of vectors \cite{features}.
The proposed features are of two kinds: neighborhood features, constructed only using information on immediate neighbors of the vertex, and recursive features, which include information from a wider region of around the vertex with every iteration.
Also, \cite{pedarsani2013bayesian} presents a heuristic that builds a alignment in phases; matched nodes in one phase serve as distance fingerprints for additional nodes in the next phase.

Non-iterative approaches for graph alignment have also been suggested. Recently a quasi-polynomial time algorithm has been proposed by Boaz et al. that performs alignment by locating copies of some low-likelihood subgraphs in both graphs and using these as the basis of the alignment \cite{barak2018nearly}. We especially note the study by Mitzenmacher and Morgan \cite{mitzenmacher} that proposes performing graph alignment based on algorithms to determine graph isomorphisms.
Defining the problem of graph alignment as a generalization of the isomorphism problem, it becomes possible to attempt to align graphs using some very efficient algorithms developed for the setting of isomorphic graphs.
We consider one such algorithm. Mitzenmacher and Morgan analyze an adversarial setting in which a small number of edge differences are introduced and the algorithm is required to succeed in all cases.
In contrast, we are interested in the case where edge differences are generated at random and the algorithm succeeds with high probability.

Studies on the information-theoretical bound of the graph alignment problem first given by Pedarsani et al. \cite{pedarsani} and further developed by Cullina et al. \cite{sigmetrics}, \cite{DBLP:journals/corr/abs-1711-06783} have established conditions beyond which no algorithm can succeed.
These fundamental bounds provide the main benchmark against which our algorithm will be compared below.


\section{Model}
\label{model}
\subsection{Notation}
For a graph \(G\) we denote its vertex set and edge set as \(V(G)\) and \(E(G)\), respectively.
Alternatively we write \(G=(V;E)\) where \(V=V(G)\) and \(E=E(G)\).
For a bipartite graph \(H\) we denote \(H = (A,B;E)\) where \(A\) and \(B\) are the partite sets and \(E=E(H)\).
For any vertex \(v\in V(G)\) let \(N_G(v)\) be the set of its neighbors in \(G\), \(d_{G}(v)\) its degree and \(d_{\overline{G}}(v)\) its complementary degree.
The maximum degree in graph \(G\) is denoted by \(\Delta(G)\).
When referring to graphs distinguished by their subscript (e.g. \(\Ga\), \(\Gb\)), we use a shorthand notation to denote neighborhoods, degrees etc.
as follows: \(\Na{v}=N_{\Ga}(v)\), \(\da{v}=d_{\Ga}(v)\), \(\dxa{v}=d_{\comp{\Ga}}(v)\).
For a set $X$, let \(X^k\) be set of vectors of length \(k\) with entries from \(X\).
We will use \([k]\) as the index set for these vectors.
We denote vectors in lower case bold font, e.g. \(\vvec=(v_1,v_2,\cdots,v_k)\in V^k\). 

For any \(n\in\natS\),  \([n]\) denotes the set of all integers from \(1\) to \(n\).
We denote by \(\textrm{Bin}(n;p)\) the binomial distribution with \(n\) trials and event probability \(p\).

\subsection{Problem Definition}

Let \(\Ga=(\Va;\Ea)\) and \(\Gb=(\Vb;\Eb)\) be graphs and let \(M : \Va \to \Vb\) be a bijection between their vertex sets.
We say that these graphs are correlated if the edge set of one provides information about the edge set of the other.
We are interested in the case of simple positive correlation: conditioning on the event \(\{u,v\} \in \Ea\) makes the event \(\{M(u),M(v)\} \in \Eb\) more likely.
The details of our random graph model are given in \hyperref[subsection:CER]{Section \ref*{subsection:CER}}.

\paragraph{Graph Alignment Problem:} For a pair of correlated random
\(\Ga = (\Va;\Ea)\) and \(\Gb = (\Vb;\Eb)\), recover \(M :\Va \to \Vb\), the bijection between the vertex pairs in the two graphs based on the correlation of the edge sets.


\subsection{Alignment by Canonical Labeling}


The classical graph isomorphism recovery problem, that is, finding the bijection between vertex sets of a pair of identical graphs, is often solved by canonical labeling based approaches. For a graph \(G = (V;E)\) this approach returns a function \(\ell_G\) from a set of vertices \(V\) to a set of labels called the canonical labeling of vertices, with the property that, any for any permutation \(\sigma\) of the vertex set and the graph \(H\) induced by this permutation on \(G\), \(\ell_G(v) = \ell_H(\sigma(v))\) for all vertices \(v\in V\). In other words, the canonical labeling only depends on the structure of the graph and is invariant to permutations of the vertex set. This allows us to identify an underlying bijection. If \(\ell_G\) is injective, then the labeling allows for recovery of the automorphism.

If the canonical labeling scheme is robust in the sense that small differences in the structure of the graph induce small perturbations on the labels of vertices, then the canonical labeling can still be used to align a pair of graphs that are “adequately” correlated. In this setting, we seek to find a matching
between the label sets of the two graphs that minimizes an appropriately defined labeling distance.

Labeling is done in two steps: In the first step vertices are labeled with their degrees and the small subset of the vertices with high-degrees are identified. These are referred to as `anchors' and form a basis for the alignment of the rest of the graph.
In the second step, the remaining vertices are labeled with signature vectors based on their adjacencies with the anchors identified in step one. 

This second step ignores all edges between unidentified vertices, effectively treating the graph as a bipartite graph.
Therefore, the second step may be considered separately as an algorithm to align two bipartite graphs with one unidentified partite set.
In the remainder of this paper, we refer to the first step as the {\em anchor alignment algorithm} and the second step as the {\em bipartite alignment algorithm}.

\label{subsec-alg}

\begin{algorithm}[H]\captionsetup{labelfont={sc,bf}}

 \caption{
  AnchorSignAlign\\
  {\bf Input:} \(\Ga=(\Va;\Ea)\), \(\Gb=(\Vb;\Eb)\), \(h\) \\
  {\bf Output:} Estimated alignment \(\widehat{M}: \Vb \to \Va\)
 }

\begin{algorithmic}[1]
\label{alg}
\begin{small}

\STATE \textbf{Step 1: Anchor alignment}
\STATE{\(\wvec_{a} = f_{h}(\Ga)\)}
\STATE{\(\wvec_{b} = f_{h}(\Gb)\)}
\FOR {\(i \in [h]\)}
\STATE{\(\widehat{M}(w_{b,i}) = w_{a,i}\)}
\ENDFOR
\STATE  \textbf{Step 2: Bipartite alignment}\\
\STATE{\(\Ha = \acc{w_{a,i} : i \in [h]}\)}
\STATE{\(\Hb = \acc{w_{b,i} : i \in [h]}\)}
\FOR {vertex \(v\in \Vb\setminus H_b\)}
\STATE{\(\widehat{M}(v) = \argmin_{u\in \Va\setminus H_a} \norm{\Sa{u}-\Sb{v}}\)}
\ENDFOR
\end{small}
\end{algorithmic}
\end{algorithm}

The alignment algorithm uses the same canonical labeling scheme originally presented for the graph isomorphism problem by Babai, Erd\H{o}s, and Selkow \cite{isomorphism} and subsequently used for graph alignment in the adversarial setting \cite{mitzenmacher}.
(Note that the graph isomorphism algorithm runs in \(\calO(n^2)\)-time when graphs because the signature matching step can be accomplished by sorting the signatures.
The variation for noisy signatures requires \(\calO(n^2h)\)-time.)

\begin{definition}
\label{def:degseq}
  For a \(n\)-vertex graph \(G\), let \(\deltavec_{G} = (\delta_{G,1},\cdots,\delta_{G,n})\) be the degree sequence of \(G\) in decreasing order. 
\end{definition}
\begin{definition}
  The high-degree sorting function \(\highf_{h}\) takes as input a graph \(G\) on the vertex set \(V\) and lists the \(h\) highest-degree vertices, sorted by degree.
  More precisely, \(\highf_h(G)\) is some vector \(\wvec = (w_1,w_2,\cdots,w_h) \in V^h\) of distinct vertices such that \(\dG{w_i} = \degseq_{G,i}\).
  \bcomment{
  \begin{align*}
    \wvec = (w_1,w_2,\cdots,w_h) = \highf_{h}(G)
  \end{align*}
  implies that \(w_i \neq w_j\) for all \(i\neq j\) and
  \begin{align*}
    \dG{w_1}\geq \dG{w_2} \geq \cdots \geq \dG{w_h} \geq \dG{u}
  \end{align*}
  for all \(u \in V\setminus\{w_1,w_2,\cdots,w_h\}\).}
\end{definition}

The degree sequence of \(G\) is always uniquely defined.
\(\highf_h(G)\) is uniquely defined only if the first \(h\) entries of \(\degseq_{G}\) are strictly decreasing.
If multiple high-degree vertices have the same degree, \(\highf_h(G)\) lists them in some arbitrary order.

Anchor alignment on graphs \(\Ga\) and \(\Gb\) corresponds to the index-by-index alignment of vertices of \(\highf_{h}(\Ga)\) and \(\highf_{h}(\Gb)\).
We refer to the set of \(h\) vertices that appear in \(f_{h}(\Ga)\) as \(\Ha\), the set of \(h\) vertices that appear in \(f_{h}(\Gb)\) as \(\Hb\), and when they are the same we say \(H_a = H_b = H\).
The bipartite alignment algorithm labels each vertex in \(V_a \setminus H_a\) by a binary vector encoding its adjacency with vertices in \(H_a\).
These labels, which we refer to as signatures, are defined as follows:
\begin{definition}
\label{def:sig}
  Given graph \(G\) and anchor vector \(\highf_{h}(G) = \wvec\) \( = (w_{1},w_{2},\cdots,w_{h})\), the signature function \(\textrm{sig}_G\) takes as input vertex \(u\in V(G)\) and returns the signature label of the vertex such that,
  \begin{align*}
    \SG{u} \in \{0,1\}^{h} \quad \textrm{ and } \quad \SG{u}_i = \ind{\{u,w_{i}\} \in E(G)}
  \end{align*}
  where \(\ind{\cdot}\) denotes the indicator function of an event. We use the shorthand notation \(\Sa{u} = \textrm{sig}_{\Ga}(u)\), \(\Sb{u} = \textrm{sig}_{\Gb}(u)\) when referring to graphs \(\Ga\) and \(\Gb\).
\end{definition}

The bipartite alignment algorithm aligns vertices in \(\rem\) such as to minimize the Hamming distance between pairs of signatures of aligned vertices.
In our analysis we consider a naive approach, aligning each vertex in one graph to the vertex with the closest signature in the other graph.
Notice that any graph alignment approach limited to signatures ignores all information pertaining to edges among the unidentified set of vertices.

The steps of the alignment algorithm are summarized in \hyperref[alg]{Algorithm  \ref*{alg}}. We refer to the estimated alignment as \(\widehat{M}\). We say the algorithm is successful when \(\widehat{M} = M\).

\subsection{Correlated Erd\H{o}s-R\'enyi Graphs}
\label{subsection:CER}
We perform our analysis on correlated Erd\H{o}s-R\'enyi (ER) graphs \cite{sigmetrics}. Under the basic ER model of random graphs, \(G\sim ER(n;p)\) is a random graph on \(n\) vertices where any two vertices share an edge with probability \(p\) independent from the rest of the graph.
Under the correlated graph model, \((\Ga,\Gb)\sim ER\pth{n;(\pll,\plo,\pol,\poo)}\) are a pair of graphs on the same set of \(n\) vertices where the occurrences of an edge \(e=\{u,v\}\) between any pair of vertices \(u,v\) is independent and identically distributed with the following probabilities:

\begin{equation}
  \label{corr-bern}
  (\ind{e \in E(G_a)},\ind{e \in E(G_b)}) = 
  \begin{cases}
    (1,1) & \text{w.p.}\ \pll \\
    (1,0) & \text{w.p.}\ \plo \\
    (0,1) & \text{w.p.}\ \pol \\
    (0,0) & \text{w.p.}\ \poo.
  \end{cases}
\end{equation}
The marginal probabilities are then defined as:
\begin{alignat*}{4}
    \plx &= \pll + \plo & \quad\quad\pxl &= \pll + \pol\\
    \pox &= \pol + \poo & \quad\quad\pxo &= \plo + \poo
\end{alignat*}
We denote the vector of probabilities as \(\pvec=(\pll,\plo,\pol,\poo)\).
Note that all probabilities are functions of \(n\).
We limit our interest to sparse graphs and only consider \(\pvec\) such that \(\lim_{n\rightarrow\infty} p_{00} = 1\).

Two other variations of the correlated Erd\H{o}s-R\'enyi model have appeared in the literature.

\textbf{Subsampling model:}
This generates a pair of correlated graphs via subsampling of a parent graph $G_{\text{parent}} \sim ER(n;r)$.
Each edge in $G_{\text{parent}}$ is then included in $\Ga$ with probability $s_a$ and in $\Gb$ with probability $s_b$.
Each of these $2|E(G_{\text{parent}})|$ edge subsampling events are independent.
This results in \((\Ga,\Gb)\sim ER\pth{n;(\pll,\plo,\pol,\poo)}\) with
\begin{align*}
  \pll &= rs_as_b\\
  \plo &= rs_a(1-s_b)\\
  \pol &= r(1-s_a)s_b\\
  \poo &= 1 - r(s_a + s_b - s_as_b).
\end{align*}
This model appeared in Pedarsani and Grossglauser \cite{pedarsani} in the symmetric case $s_a = s_b$.
Observe that $\frac{\pll}{\plx\pxl} = \frac{1}{r} \geq 1$, so this model can only represent non-negatively correlated graphs.

\textbf{Perturbation model:}
This starts by generating a base graph $G_{\text{parent}} \sim ER(n;r)$.
In the adversarial perturbation model considered by Mitzenmacher and Morgan \cite{mitzenmacher}, $G_a$ and $G_b$ are each created by making up to $d/2$ changes to the edge set of $G_{\text{base}}$.
In the natural randomized version, $G_a$ and $G_b$ differ from $G_{\text{base}}$ at each of the $\binom{n}{2}$ vertex pairs independently with probability $\delta = \frac{d}{n(n-1)}$. 
This results in \((\Ga,\Gb)\sim ER\pth{n;(\pll,\plo,\pol,\poo)}\) with
\begin{align*}
  \pll &= r(1-2\delta) + \delta^2\\
  \plo &= \delta - \delta^2\\
  \pol &= \delta - \delta^2\\
  \poo &= (1-r)(1-2\delta) + \delta^2.
\end{align*}

The models that we have just described generates a pair of graphs on the same vertex set \(V\).
To convert these graphs to a pair of correlated graphs on distinct vertex sets, the vertices of \(G_b\) can be relabeled using the bijection $M : V \to V_b$.
This relabeling hides the association between the vertex sets and makes the alignment recovery problem nontrivial.
For the analysis of \hyperref[alg]{Algorithm \ref*{alg}}, it is more convenient to work with pairs of graphs on the same vertex sets rather than work with $M$ explicitly, so we will do this for the remainder of the paper.

In the case of bipartite graphs we use an analogous model.
We denote the distribution as \(ER\pth{h,k;\pvec}\) for pairs of correlated graphs
with left vertex set of size $h$ and right vertex set of size $k$.
For random bipartite graphs \((B_a,B_b)\sim ER\pth{h,k;\pvec}\), a left vertex \(u\), and a right vertex \(v\), the pair of random variables \((\ind{(u,v) \in E(B_a)},\ind{(u,v) \in E(B_b)})\) have the same distribution as \hyperref[corr-bern]{(\ref*{corr-bern})}.

\subsection{Outline and Intuition for the Analysis}

The two steps of \hyperref[alg]{Algorithm  \ref*{alg}} dictate opposing bounds on the value of the parameter \(h\). The bipartite alignment phase requires distinct signatures, which is guaranteed only if the length of the signature vectors (\(h\)) is large enough. However, the performance of the anchor alignment phase degrades as \(h\) grows larger. Our analysis consists of determining upper and lower bounds on \(h\) and identifying the region for \(h\) which satisfies both bounds.

In  \hyperref[subsec-HD]{Subsection \ref*{subsec-HD}} we present a sufficient condition to perfectly align the \(h\) highest-deegree vertices in correlated ER graphs. This gives an upper bound on \(h\). The result is derived by determining the conditions that guarantee, with high probability, that the \(h\) highest-degree vertices have large enough degree separation. It is then unlikely that any two high-degree vertices have their degree order reversed. Applying the Chernoff bound, we show that a degree separation of \(\sigma\sqrt{\log h}\) is sufficient, where \(\sigma^2 \approx n(\plo+\pol)\) is the variance of a vertex degree in \(\Gb\) given its degree in \(\Ga\). Trivially,  independent of the variance, the degree separation must also be at least 1. Thus we get
\begin{align}
\label{minDegSep}
    \textrm{minimum degree separation} \geq \max\acc{1,\sigma\sqrt{\log h}}.
\end{align}
Combining (\ref{minDegSep}) with a known result on the degree separation of high-degree vertices gives \hyperref[theorem:HDMatching]{Theorem \ref*{theorem:HDMatching}}, which states a sufficient condition on \(h\) for high-degree matching. Ignoring logarithmic terms, this condition can be simply written as
\begin{align*}
    \frac{\sqrt{n\pll}}{\max\acc{1,\sqrt{n(\plo+\pol)}}} \geq \omega\pth{h^2}.
\end{align*}
The intuition behind this result is as follows: given that all vertex degrees are distributed within an interval of size roughly \(\sqrt{n\pll}\), we can partition the range of degrees into \(\sqrt{n\pll}/\max\{1,\sqrt{n(\plo+\pol)}\}\) bins of size equal to the minimum degree separation. Two vertices in the same bin violate the degree separation requirement. If the degrees of the \(h\) high-degree vertices were to be distributed uniformly within this range, then by the birthday paradox, we would need the number of bins to be significantly larger than \(h^2\). Clearly high-degree vertices are not uniformly distributed. Nevertheless a rigorous analysis shows that this rough estimate is accurate in the leading term and differs from the actual necessary condition only in the logarithmic terms.

In order to understand the constraints on the bipartite matching phase, in \hyperref[subsec-bipartite]{Subsection \ref*{subsec-bipartite}} we analyze the closely related problem of correlated random bipartite graphs. We try to match one of the partite sets based on the complete knowledge of the matching of the other partite set. The identified set is of size \(h\). As in \hyperref[alg]{Algorithm  \ref*{alg}}, this matching is done through sparse binary signatures. The signatures of the copies of any vertex in the two graphs have around \(h\pll\) common ones. Thus \(h\geq\Omega(1/\pll)\) is a necessary condition for matching. Applying the Chernoff inequality and the union bound over all \(\binom{n-h}{2}\approx n^2/2\) possible mismatches, we derive the result in \hyperref[rem]{Remark \ref*{rem}} which gives the sufficient condition as
\begin{align*}
    h\geq\frac{2\log n+\omega(1)}{\pll}.
\end{align*}

This problem closely relates to the bipartite alignment phase of \hyperref[alg]{Algorithm  \ref*{alg}}; in both cases we assume to have complete knowledge of the alignment of one partite set (i.e. the set of anchor vertices) and try to match the other side by only considering edges that connect these two sets. In the case of the correlated bipartite distribution, the analysis is straightforward since edge random variables are independent. But in the general case the edge random variables between the high-degree set and the remaining vertices are not independent. Fortunately the dependence is weak and it is possible to handle this issue by requiring the anchor set to be robust to the addition or removal of a pair of vertices. (This simply requires an additional degree separation of 2 between anchor vertices.)


\section{Analyses and results}
\label{results}

Our main result is a condition under which Algorithm 1 successful recovers the true graph alignment.
\newcommand{\mainthm}{%
  Let \(\Ga=(V;\Ea)\) and \(\Gb=(V;\Eb)\) such that \((\Ga,\Gb)\sim ER(n,\pvec)\) where \(\pvec\) is a function of \(n\) with \(p_{11} \leq o(1)\),
  \begin{align*}
    \pll \geq \omega\bigg(\frac{\log^{7/5} n}{n^{1/5}}\bigg) \;\; \textrm{and} \;\; \pol+\plo &\leq o\bigg(\frac{\pll ^5}{\log^6 n}\bigg),
  \end{align*}
  Then \hyperref[alg]{Algorithm \ref*{alg}} with parameter \(h\) such that
  \begin{align*}
      \frac{\log n + \omega(1)}{p_{11}} \leq h \leq \mathcal{O}\pth{\frac{\log n}{p_{11}}}
  \end{align*}
  exactly recovers the alignment between the vertex sets of \(\Ga\) and \(\Gb\) with probability \(1-o(1)\).
  }

\begin{theorem}
\label{theorem:main}
\mainthm
\end{theorem}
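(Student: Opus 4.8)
The plan is to prove \hyperref[theorem:main]{Theorem \ref*{theorem:main}} by combining the two intermediary guarantees flagged in the outline --- the anchor-alignment result \hyperref[theorem:HDMatching]{Theorem \ref*{theorem:HDMatching}} and the bipartite-alignment result \hyperref[rem]{Remark \ref*{rem}} --- and checking that the hypotheses on \(\pvec\) place the stated window for \(h\) simultaneously inside the feasible region of both. The two phases impose opposing constraints: the anchor phase matches the \(h\) highest-degree vertices index-by-index only when \(h\) is small enough that consecutive high-degree gaps exceed the fluctuation scale \(\sigma\sqrt{\log h}\), while the bipartite phase separates all non-anchor signatures only when \(h\) is large enough, of order \(\log n/\pll\), to meet the lower bound \((\log n+\omega(1))/\pll\). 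The whole argument therefore reduces to showing these opposing constraints are compatible and then taking a union bound over the two failure events.

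The first step is to show the window \([(\log n+\omega(1))/\pll,\ \calO(\log n/\pll)]\) is nonempty. Fixing \(h=\Theta(\log n/\pll)\) meets the bipartite lower bound by construction, so it remains to verify the anchor upper bound of \hyperref[theorem:HDMatching]{Theorem \ref*{theorem:HDMatching}}, which up to logarithmic factors asks that \(\sqrt{n\pll}/\max\acc{1,\sqrt{n(\plo+\pol)}}\geq\omega(h^2)\). I would split into two regimes. When \(n(\plo+\pol)\leq 1\) the denominator is \(1\) and, substituting \(h^2=\Theta(\log^2 n/\pll^2)\), the condition reduces to \(\pll\gtrsim\log^{4/5}n/n^{1/5}\), which is implied with room to spare by the hypothesis \(\pll\geq\omega(\log^{7/5}n/n^{1/5})\). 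When \(n(\plo+\pol)>1\) the same substitution gives \(\plo+\pol\lesssim\pll^5/\log^4 n\), again implied with slack by \(\plo+\pol\leq o(\pll^5/\log^6 n)\). The surplus logarithmic factors in the hypotheses are precisely what absorb the suppressed \(\sqrt{\log h}\) and \(\omega(1)\) terms, so in both regimes \(h=\Theta(\log n/\pll)\) sits strictly below the anchor threshold.

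The step I expect to be the main obstacle is reconciling the dependence between the two phases. \hyperref[rem]{Remark \ref*{rem}} is proved in the correlated bipartite model, where edges between the two partite sets are independent; but in the full graph the anchor set \(H\) is a function of the entire edge set, and in particular the edges joining \(H\) to the non-anchors both define the signatures used in phase two and contribute to the degrees that select \(H\). Conditioning on the event ``\(H\) is the high-degree set'' thus biases exactly the signature edges we need to analyze and breaks the independence that \hyperref[rem]{Remark \ref*{rem}} assumes. I would resolve this as the outline suggests, by strengthening the anchor phase to select a \emph{robust} set \(H\) --- one whose identity is unchanged under the addition or deletion of any single pair of vertices. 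This costs only an additional degree separation of \(2\), negligible against the \(\sigma\sqrt{\log h}\) gap already demanded, and it ensures the identity of \(H\) does not depend on any individual signature edge, so those edges may be treated as independent and the bipartite bound transfers to the full model.

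Finally, on the intersection of the events that (i) the anchor phase correctly matches \(\Ha\) to \(\Hb\) index-by-index and returns a robust \(H\), and (ii) conditioned on \(H\), every non-anchor vertex of \(\Gb\) has strictly smaller Hamming distance \(\norm{\Sa{\cdot}-\Sb{\cdot}}\) to its true partner's signature than to any competitor, the algorithm outputs \(\widehat{M}=M\). Each event holds with probability \(1-o(1)\) by \hyperref[theorem:HDMatching]{Theorem \ref*{theorem:HDMatching}} and \hyperref[rem]{Remark \ref*{rem}} respectively, so a union bound yields overall success probability \(1-o(1)\), completing the proof.
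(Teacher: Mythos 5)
Your proposal is correct and follows essentially the same route as the paper's proof: the same decomposition into the anchor guarantee of \hyperref[theorem:HDMatching]{Theorem \ref*{theorem:HDMatching}} (verified against the hypotheses on \(\pvec\) exactly as you do) plus the signature-based bipartite bound, the same resolution of the cross-phase dependence via the degree-separation-of-\(3\) robustness of the anchor set under removal of any vertex pair (made precise in the paper by Lemmas \ref{lemma:bip-dist} and \ref{lemma:deg-spacing}), and a final union bound over pairs. The one piece of bookkeeping you leave implicit is that the bipartite threshold in \hyperref[rem]{Remark \ref*{rem}} is stated in terms of \(\rho^2\), not \(\pll\), so to place the theorem's lower bound \(\frac{\log n + \omega(1)}{\pll}\) inside the feasible region one must check \(\rho \geq \sqrt{2\pll}\pth{1-\mathcal{O}\pth{\frac{1}{\log n}}}\) from the noise hypothesis \(\plo+\pol \leq o\pth{\frac{\pll^5}{\log^6 n}}\), a short computation the paper carries out explicitly.
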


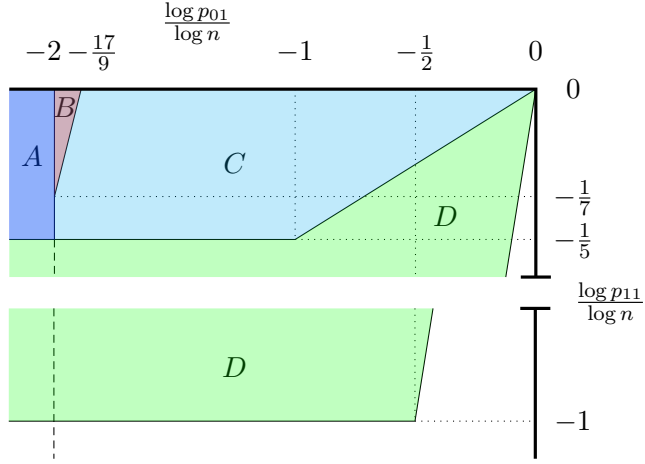
\begin{figure}[ht]
\centering
\begin{tikzpicture}[
 text height = 1.5ex,
 text depth =.1ex,
 b/.style={very thick}
 ]

 \draw (-4.5, 0.9) node {$\frac{\log p_{01}}{\log n}$};
 \draw[color=white] (1,0) node {$\frac{\log p_{11}}{\log n}$};
 
 \draw (0,0.5) node {$0$};
 \draw (-8/5,0.5) node {$-\frac{1}{2}$};
 \draw (-16/5,0.5) node {$-1$};
 \draw (-272/45+0.15,0.5) node {$-\frac{17}{9}$};
 \draw (-32/5-0.15,0.5) node {$-2$};
 
 \draw (0.5,0) node {$0$};
 \draw (0.5,-10/7) node {$-\frac{1}{7}$};
 \draw (0.5,-2) node {$-\frac{1}{5}$};
 
 \draw (-6.7,-0.9) node {$A$};
 \draw (-6.26,-0.25) node {$B$};
 \draw (-4,-1) node {$C$};
 \draw (-1.2,-1.7) node {$D$};

 \draw[dotted] (-16/5,0) -- (-16/5,-2);
 \draw[dotted] (-8/5,0) -- (-8/5,-2.5);
 \draw (-32/5,0) -- (-32/5,-10/7);
 \draw (-32/5,-10/7) -- (-32/5,-2);
 \draw[densely dashed] (-32/5,-2) -- (-32/5,-2.5);
 
 \draw[dotted] (-32/5,-10/7) -- (0,-10/7);
 \draw (-7,-2) -- (-16/5,-2);
 \draw[dotted] (-16/5,-2) -- (0,-2);
 
 \draw (-2*80/50,-2) -- (0,0);
 \draw (-2.5*16/100,-2.5) -- (0,0);
 \draw (-272/45,0) -- (-32/5,-10/7);

 \fill[cyan,opacity=0.25] (-7,0) -- (0, 0) -- (-16/5,-2) -- (-7,-2) -- cycle;
 \fill[blue,opacity=0.25] (-7,0) -- (-32/5,0) -- (-32/5,-2) -- (-7,-2) -- cycle;
 \fill[red,opacity=0.25] (-32/5,0) -- (-272/45,0) -- (-32/5,-10/7) -- cycle;
 \fill[green,opacity=0.25] (-7,-2.5) -- (-2.5*16/100,-2.5) -- (0,0) -- (-16/5,-2) -- (-7,-2) -- cycle;
 \draw[b] (-7,0) -- (0,0) -- (0,-2.5);
 \draw[b] (-0.2,-2.5)--(0.2,-2.5);

\end{tikzpicture}

\begin{tikzpicture}[
 text height = 1.5ex,
 text depth =.1ex,
 b/.style={very thick}
 ]
 \draw (1, 0.1) node {$\frac{\log p_{11}}{\log n}$};
 \draw (0.5,-1.5) node {$-1$};
 \draw (-4,-0.8) node {$D$};
 
 \draw (-7,-1.5) -- (-8/5,-1.5);
 \draw[dotted] (-8/5, 0) -- (-8/5, -1.5);
  
 \draw[dotted] (-8/5,-1.5) -- (0,-1.5);
 \draw[densely dashed] (-32/5,0) -- (-32/5,-2);
 
 \draw (-10*16/100,-1.5) -- (-8.5*16/100,0);
 


 \fill[green,opacity=0.25] (-10*16/100,-1.5) -- (-8.5*16/100,0) -- (-7,0) -- (-7,-1.5) -- cycle;
 \draw[b] (0,0) -- (0,-2);
 \draw[b] (-0.2,0)--(0.2,0);
\end{tikzpicture}

\caption{
  Comparison of regions of achievability for symmetric noise (\(\plo = \Theta\pth{\pol}\)): (A) region achievable by \hyperref[alg]{Algorithm \ref*{alg}} under no expected noise \cite{bollobas}, (A\(\cup\)B) under the adversarial model \cite{mitzenmacher}, (A\(\cup\)B\(\cup\)C) under the random graph model, (A\(\cup\)B\(\cup\)C\(\cup\)D) theoretical achievability region for the random graph model \cite{sigmetrics}.
  }
 \label{regions}
\end{figure}

\hyperref[regions]{Fig. \ref*{regions}} illustrates the asymptotic achievability region of \hyperref[alg]{Algorithm \ref*{alg}} as a function of graph density \(\pth{\frac{\log\pll}{\log n}}\) and noise \(\pth{\frac{\log\plo}{\log n}}\). We also include the achievability of the noiseless scenario \cite{isomorphism}, more challenging adversarial scenario \cite{mitzenmacher}, as well as the information theoretic achievability region \cite{sigmetrics}. We only consider the symmetric case, where \(\plo = \Theta\pth{\pol}\). The \(x\)-axis shows \(\frac{\log \pol}{\log n}\) and the \(y\)-axis shows \(\frac{\log \pll}{\log n}\). Note that in the region \(x < -2\), the number of edge edge differences between the pairs of graphs is zero under the adversarial model and is zero with high probability under the random graph model, so the alignment problem reduces to the graph isomorphism problem.

The adversarial model is defined as follows: Consider a random graph \(\Ga = ER(n;p)\) and its modified copy \(\Gb\) obtained by the addition or deletion of at most \(d\) edges by an adversary where \(d\geq 2\) is a deterministic function of \(n\).

Note that the parameters in this problem formulation relate to the correlated random graph problem through:
\begin{align}
\label{rand2adv}
    p = \plx = \pll + \plo = \pll\pth{1+o(1)} \quad \textrm{ and } \quad d = \pth{\plo + \pol}\binom{n}{2}.
\end{align}
By Theorem 5.3 in \cite{mitzenmacher}, there exists an appropriate choice of parameter \(h\) for which \hyperref[alg]{Algorithm \ref*{alg}} perfectly aligns the vertex sets of the two graphs with probability at least \(1-o(1)\) as long as \(p = \omega\pth{d\log n \pth{\frac{d^2}{ n}}^{1/7}}\). By the equalities in \hyperref[rand2adv]{(\ref*{rand2adv})}, this condition is satisfied when
\[\Omega\pth{n^{-2}} \leq \pth{\pol+\plo}^{\frac{9}{7}} \leq o\pth{\frac{\pll}{\log n}n^{-17/7}}.\]
Recall that the \(x\)-axis shows \(\frac{\log \pol}{\log n}\) and the \(y\)-axis shows \(\frac{\log \pll}{\log n}\).
Taking the logarithm of both sides and dividing by \(\log n\), in the symmetric case, results in the triangular region defined by the inequality
\[-2 \leq \frac{9}{7}x \leq y - \frac{17}{7} - o(1).\]

Note that \(d = o(1)\) for \(\pol+\plo \leq o\pth{n^{-2}}\), so the adversarial scenario with a fixed number of edge changes reduces to the graph isomorphism problem and under the random graph model the graphs are isomorphic with high likelihood. The condition to guarantee successful alignment for that problem, given in Theorem 3.17 in \cite{bollobas}, is \(p = \omega\pth{n^{-1/5}\log n}\), which corresponds to the region where
\begin{align*}
    y \geq -\frac{1}{5} + o(1) \quad \textrm{ and } \quad x \leq -2.
\end{align*}

The achievability region is derived similarly. Theorem 2 in \cite{sigmetrics} gives the following achievability condition as
\[\pll \geq 2\frac{\log n + \omega(1)}{n} \quad \textrm{ and } \pol\plo = o\pth{\pll\poo},\]
which for \(\poo = 1-o(1)\) and \(\plo = \Theta\pth{\pol}\) corresponds to 
\[\pll = \Omega\pth{\frac{\log n}{n}} \quad \textrm{ and } \pol^2 = o\pth{\pll}.\] This gives the the region defined by
\begin{align*}
    y \geq -1 + o(1) \quad \textrm{ and } \quad 2x \leq y.
\end{align*}

In \hyperref[subsec-HD]{subsection \ref*{subsec-HD}} we analyze the performance of the anchor alignment stage of the algorithm.
In \hyperref[subsec-bipartite]{subsection \ref*{subsec-bipartite}} we present the result on the performance of bipartite graph alignment stage of the algorithm.
Finally, in \hyperref[subsec-general]{subsection \ref*{subsec-general}} the results from these two analyses are combined to provide a proof on performance of the alignment algorithm.



\subsection{Anchor alignment}
\label{subsec-HD}
The expected performance on the alignment of the anchors (i.e. high-degree vertices) is a function of the sparsity of the graph, its size, and the number of anchors to be matched.
We first present a result on the required minimum degree separation between a pair of vertices in one graph to guarantee a given degree separation on the other graph with high probability.
We remind the reader of our shorthand notation where for any vertex \(v\in V\), \(\da{v}\) and \(\dxa{v} = |V|-\da{v}-1\) denote \(v\)'s degree and inverse degree in \(\Ga\), respectively. Similarly \(\db{v},\dxb{v}\) denote the degree and inverse degree in \(\Gb\).

\newcommand{\lemmaHDMatchingTwo}{%
    Let \((\Ga,\Gb)\sim ER(n;\pvec)\). Given \(u,v\in V(G)\) such that \(\da{u}>\da{v}\), define \(\varphi \define \da{u}\plox + \dxa{v}\polx\) and \(\varepsilon = \polx + \plox\). Let \(\eh \in(0,\infty)\) be a function of \(n\). If
      \begin{align*}
        &\da{u}-\da{v} \geq (1-\varepsilon)^{-1}\Bpth{k + 4\max\pth{\eh,\sqrt{\varphi\cdot\eh}}},
      \end{align*}
      Then \(\prob{\db{u}-\db{v}\leq k} \leq e^{-\eh}\).
  }

\begin{lemma}
\label{lemma:HDMatching2}
\lemmaHDMatchingTwo
\end{lemma}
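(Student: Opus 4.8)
The plan is to work conditionally on $\Ga$ and treat $\db{u}-\db{v}$ as a sum of independent bounded increments driven by the conditional law of $\Gb$ given $\Ga$, then apply a Bernstein-type concentration inequality. First I would write $\db{u}-\db{v}=\sum_{w\notin\{u,v\}}X_w$, where $X_w \define \ind{\{u,w\}\in E(\Gb)}-\ind{\{v,w\}\in E(\Gb)}\in\{-1,0,1\}$; the shared pair $\{u,v\}$ contributes the same indicator to both degrees and so cancels. Under the correlated ER model, conditioned on $\Ga$ the edges of $\Gb$ are independent across distinct vertex pairs, so the $X_w$ are mutually independent, and within each $X_w$ the two indicators are conditionally independent. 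All quantities in the hypothesis ($\da{u},\da{v},\dxa{v},\varphi,\varepsilon$) are $\Ga$-measurable, so a conditional-on-$\Ga$ bound of $e^{-\eh}$ immediately yields the stated bound.

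Next I would compute the conditional mean by classifying each $w\notin\{u,v\}$ according to the pair $(\ind{\{u,w\}\in E(\Ga)},\ind{\{v,w\}\in E(\Ga)})$. A vertex $w$ adjacent in $\Ga$ to $u$ but not $v$ contributes mean $\pllx-\polx$; one adjacent to $v$ but not $u$ contributes $\polx-\pllx$; the two symmetric classes contribute $0$. Letting $n_B,n_C$ count the first two classes, the neighbor counts give $n_B-n_C=\da{u}-\da{v}$, and since $\pllx-\polx=1-(\plox+\polx)=1-\varepsilon$ I get $\E{\db{u}-\db{v}\mid \Ga}=(\da{u}-\da{v})(1-\varepsilon)$. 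By the hypothesis this mean is at least $k+4\max(\eh,\sqrt{\varphi\eh})$.

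For the variance I would bound $\mathrm{Var}(\db{u}\mid\Ga)\le \da{u}\plox+\dxa{u}\polx$ and $\mathrm{Var}(\db{v}\mid\Ga)\le \da{v}\plox+\dxa{v}\polx$, since a $\Ga$-neighbor contributes variance $\pllx\plox\le\plox$ and a $\Ga$-non-neighbor contributes $\polx\poox\le\polx$. The key observation is that $\da{u}>\da{v}$ forces both $\da{v}\plox<\da{u}\plox$ and $\dxa{u}<\dxa{v}$, so by independence the total conditional variance is at most $2(\da{u}\plox+\dxa{v}\polx)=2\varphi$. This is precisely why $\varphi$ pairs the larger degree $\da{u}$ with the larger complementary degree $\dxa{v}$.

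Finally I would apply the one-sided Bernstein inequality to the centered sum. With $t\define \E{\db{u}-\db{v}\mid\Ga}-k\ge 4\max(\eh,\sqrt{\varphi\eh})$, increments bounded by $2$, and variance at most $2\varphi$, it gives $\prob{\db{u}-\db{v}\le k\mid\Ga}\le\exp\big(-\tfrac{t^2/2}{2\varphi+2t/3}\big)$. It then remains to check the exponent is at least $\eh$, which I would do in the two regimes $\varphi\le\eh$ (so $t\ge4\eh$) and $\varphi>\eh$ (so $t\ge4\sqrt{\varphi\eh}$, with $\sqrt{\varphi\eh}<\varphi$); in each case monotonicity of $t\mapsto t^2/(2\varphi+2t/3)$ lets me substitute the smallest admissible $t$ and the exponent comes out to $\tfrac{12}{7}\eh>\eh$. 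The main obstacle I anticipate is not the concentration step but getting the bookkeeping of the conditional mean and the asymmetric variance bound exactly right, since everything hinges on folding the larger degree and larger complementary degree into $\varphi$; the final inequality then follows because the factor $4$ was chosen to absorb the worst-case constants in both Bernstein regimes.
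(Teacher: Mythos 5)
Your proof is correct, and it takes a genuinely different route from the paper's. The paper works with the exact conditional probability generating function of $\beta = \db{u}-\db{v}$ given the degrees of $u$ and $v$ in $\Ga$ (a product of four binomial generating functions times $z^{\alpha}$, derived in the appendix), applies the Chernoff bound, optimizes at $z^* = \sqrt{r/r'}$ to get an exponent $-\pth{\sqrt{r'}-\sqrt{r}}^2$, and converts this to $e^{-\eh}$ via the elementary inequality $\sqrt{1+x^2}-x \geq \frac{1}{1+2x}$; its quantities $r$ and $r'$ play the roles your variance and mean play. You instead decompose the difference into conditionally independent $\{-1,0,1\}$-valued increments, compute the conditional mean exactly as $(\da{u}-\da{v})(1-\varepsilon)$, bound the conditional variance by $2\varphi$, and invoke Bernstein as a black box; the two branches of the hypothesis $\max\pth{\eh,\sqrt{\varphi\cdot\eh}}$ then line up precisely with the bounded-increment and Gaussian regimes of the Bernstein exponent, and your arithmetic (exponent at least $\tfrac{12}{7}\eh > \eh$ in both cases) checks out. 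Your route buys transparency — it makes visible why the hypothesis has this two-regime form and avoids the generating-function computation — while the paper's route buys self-containedness (only $1+x\leq e^x$ and $\log x \leq x-1$ are needed, no citation of Bernstein) and a slightly tighter variance proxy ($r \leq \varphi$ versus your $2\varphi$, both absorbed by the factor $4$). One sentence you should tighten when writing this up: ``by independence the total conditional variance is at most $2(\da{u}\plox + \dxa{v}\polx)$'' must not be read as claiming $\db{u}$ and $\db{v}$ are conditionally independent — they are not, since both involve the pair $\{u,v\}$. The correct justification is the one your own decomposition provides: after the shared edge cancels, $\db{u}-\db{v}=\sum_w X_w$ with the $X_w$ conditionally independent, so the variance is additive over $w$, each $\mathrm{Var}\pth{X_w \mid \Ga}$ splits into two indicator variances, and summing and using $\da{v}<\da{u}$, $\dxa{u}<\dxa{v}$ (together with the fact that deleting the pair $\{u,v\}$ only decreases the relevant counts) gives the stated bound. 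Phrased that way, the step is airtight.
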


\bcomment{
\begin{proof}
  Let us denote the degree separations in the two graphs by \(\alpha \define \da{u}-\da{v}\) and \(\beta \define \db{u}-\db{v}\).
  Observe that the presence of the edge \(\{u,v\}\) in \(G_a\) does not affect \(\alpha\).
  Thus we define
  \begin{align*}
      \daprime{u} \define |N_a(u) \setminus \{v\}| \qquad &\dxaprime{u} \define n-2-\daprime{u}\\
      \daprime{v} \define |N_a(v) \setminus \{u\}| \qquad &\dxaprime{v} \define n-2-\daprime{v}.
  \end{align*}
  The error event in the degree sequence, i.e. \(\db{u}-\db{v}\leq k\), corresponds to \(\beta\leq k\). By the Chernoff bound:
  \begin{align*}
    \prob{\beta\leq k|\daprime{u},\daprime{v}} \leq z^{-k}\E{z^{\beta}|\daprime{u},\daprime{v}} \phantom{5} \forall 0 < z \leq 1.
  \end{align*}
  In \hyperref[appendix:beta]{Appendix \ref*{appendix:beta}} we derive an expression for the probability generating function \(F_\beta(z) \define \E{z^{\beta}|\daprime{u},\daprime{v}}\):
  \begin{align*}
    F_\beta(z) = &z^{\alpha}\pth{1+\plox(z-1)}^{\daprime{v}}\pth{1+\polx(z-1)}^{\dxaprime{u}}\\
    &\times\pth{1+\plox\pth{z^{-1}-1}}^{\daprime{u}}\pth{1+\polx\pth{z^{-1}-1}}^{\dxaprime{v}}.
  \end{align*}
  
  By applying \(1+x\leq e^x\) we get
  \begin{align}
      F_\beta(z) \leq &\exp\acc{\alpha\log z + \pth{\plox \daprime{v} + \polx \dxaprime{u}}(z-1)}\nonumber\\
      &\times\exp\acc{\pth{\plox \daprime{u} + \polx \dxaprime{v}}\pth{z^{-1}-1}}\nonumber
      \intertext{Furthermore applying \(\log x \leq x-1\) we have}
      z^{-k}F_\beta(z) \leq &\exp\acc{\pth{\alpha -k +\plox \daprime{v} + \polx \dxaprime{u}}(z-1)}\nonumber\\
      &\times\exp\acc{\pth{\plox \daprime{u} + \polx \dxaprime{v}}\pth{z^{-1}-1}} \label{ineq:lemma:HDMatching2}
  \end{align}
  Denote the coefficients by
  \begin{align*}
      r' \define \alpha - k+\plox \daprime{v} + \polx \dxaprime{u} \quad \textrm{ and } \quad r \define \plox \daprime{u} + \polx \dxaprime{v}.
  \end{align*} Denote their difference as
  \begin{align*}
      \Delta r \define r' - r &= \alpha - k + \plox \daprime{v} - \plox\Bpth{(n-2)-\daprime{v}}\\
      &\qquad - \plox\daprime{u} + \polx\Bpth{(n-2)-\daprime{u}}\\
      &= \alpha - k - \pth{\polx + \plox}(\daprime{u}-\daprime{v})\\
      &= \alpha \pth{1-\polx-\plox} - k
  \end{align*}\
  The right hand side of the inequality in \hyperref[ineq:lemma:HDMatching2]{(\ref*{ineq:lemma:HDMatching2})} is minimized at \(z^* \define \sqrt{r/r'}\). Taking the logarithm of both sides in \hyperref[ineq:lemma:HDMatching2]{(\ref*{ineq:lemma:HDMatching2})} and evaluating it at \(z=z^*\) we get
  \begin{align*}
      \log F_\beta(z^*)-k\log z^* \leq -\pth{\sqrt{r'}-\sqrt{r}}^2 = -\Delta r \pth{\sqrt{1+r/\Delta r}-\sqrt{r/\Delta r}}^2.
  \end{align*}
  The inequality \(\sqrt{1+x^2}-x\geq \frac{1}{1+2x}\) holds for any \(x\geq 0\). Specifically the choice of \(x = \sqrt{r/\Delta r}\) results in
  \begin{align*}
      -\croc{\log F_\beta(z^*)-k\log z^*} \geq \frac{\Delta r}{(1+2\sqrt{r/\Delta r})}
  \end{align*}
  Note that:
  \begin{align*}
    \Delta r \geq 4\max\acc{\eh,\sqrt{r\eh}} 
    \implies \pth{1+2\sqrt{\frac{r}{\Delta r}}}^2 &\leq 4\min\acc{1,\frac{r}{\eh},\sqrt{\frac{r}{\eh}}}\\
    \implies \frac{\Delta r}{\pth{1+2\sqrt{\frac{r}{\Delta r}}}^2} &\geq \frac{1}{4}\max\acc{\Delta r',\Delta r\frac{\eh}{r},\Delta r\sqrt{\frac{\eh}{r}}}\\
    &\geq\frac{\Delta r}{4}\geq\eh
  \end{align*}
  which implies \(\pth{z^*}^{-k}F_\beta(z^*) \leq e^{-\eh}\). Finally observe that \(\varphi \define \da{u}\plox + \dxa{v}\polx\) is at least \(r\). Therefore the condition in the statement of the lemma implies \(\Delta r \geq 4\max\acc{\eh,\sqrt{r\eh}}\).
\end{proof}
}

Lemma~\ref{lemma:HDMatching2} involves two lower bounds on the gap between degrees: one depending on $\eta$ and the other on $\sqrt{\varphi \cdot \eta}$.
  The quantity \(\varphi\) is the expected number of edges `lost' by \(u\) and `gained' by \(v\) when moving from \(\Ga\) to \(\Gb\).
  A larger \(\varphi\) implies higher likelihood for the degree gap to be `bridged' moving from \(\Ga\) to \(\Gb\).
  At the dense high-noise performance limit, the \(\sqrt{\varphi\cdot\eta}\) lower bound is dominant.
  The \(\eta\) lower bound arises from the discreteness of the degrees.
This bound is dominant at the sparse low-noise limit.

\hyperref[lemma:HDMatching2]{Lemma \ref*{lemma:HDMatching2}} only concerns pairs of vertices. Next we present a condition on the graph sequence of \(\Ga\) that guarantees with high probability the desired degree separation among high-degree vertices in \(\Gb\). Recall that, by \hyperref[def:degseq]{Definition \ref*{def:degseq}}, \(\degseq_a\) and \(\degseq_b\) denote the degree sequences in \(\Ga\) and \(\Gb\) respectively.

\newcommand{\CorollaryHDMatchingN}{%
  Let \((\Ga,\Gb)\sim ER(n;\pvec)\) where \(\Ga = (V;\Ea)\) and \(\Gb = (V;\Eb)\).
  Define \(\varphi \define \Delta(\Ga)\plox + n\polx\) and \(\varepsilon \define \polx + \plox\).
  Let \(h\in[n]\) and
  \(\eh\) be functions of \(n\).
  Let \(s\) be an integer such that \(s \geq h + \frac{1}{\eh}\log \pth{\frac{n}{h}}+1\).
  If
  \begin{align}
  \label{corollary:HDMatchingN:cond1}
    \forall i\in [s], \quad \degseq_{a,i} - \degseq_{a,i+1} &\geq (1-\varepsilon)^{-1}\Bpth{k + 4\max\acc{\eh,\sqrt{\varphi\cdot\eh}}}
  \end{align}
  then, with probability at least \(\frac{1-(2h+1)e^{-\eh}}{1-e^{-\eh}}\), \(f_{h}(\Ga) = f_{h}(\Gb)\) and \(\degseq_{b,i} - \degseq_{b,i+1} > k\) for any \(i\in[h]\).
  }

\begin{corollary}
\label{corollary:HDMatchingN}
\CorollaryHDMatchingN
\end{corollary}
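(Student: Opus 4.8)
The plan is to condition on $\Ga$ and reduce the corollary to a union bound over pairwise degree-comparison events in $\Gb$, each controlled by \hyperref[lemma:HDMatching2]{Lemma \ref*{lemma:HDMatching2}} applied conditionally on $\Ga$. Write $u_1,\dots,u_n$ for the vertices of $V$ listed in order of decreasing $\Ga$-degree, so that $\da{u_i}=\degseq_{a,i}$ and $(u_1,\dots,u_h)=f_h(\Ga)$, and set $\tau\define(1-\varepsilon)^{-1}\pth{k+4\max\acc{\eh,\sqrt{\varphi\eh}}}$. Since the corollary's $\varphi=\Delta(\Ga)\plox+n\polx$ dominates the lemma's quantity $\da{u}\plox+\dxa{v}\polx$ for every pair $u,v$, and the gap threshold in the lemma is increasing in $\varphi$, hypothesis \eqref{corollary:HDMatchingN:cond1} makes the lemma applicable with the single threshold $\tau$ to every pair we consider. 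The target good event is implied by the conjunction of $\db{u_i}-\db{u_{i+1}}>k$ for $i\in[h]$ (which yields both the claimed $\Gb$-separations and keeps $u_{h+1}$ below $u_h$) together with $\db{u_h}-\db{u_j}>k$ for every $j\ge h+2$ (which keeps every remaining vertex below $u_h$, so that $f_h(\Gb)=(u_1,\dots,u_h)$). Here I use the mild assumption $k\ge 0$, so that each ``$>k$'' forces a strict ordering in $\Gb$.

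The key step, and the main obstacle, is a rescaling of the lemma that converts a large $\Ga$-gap into an exponentially small failure probability; this is what lets the final union bound range over all $n$ vertices while contributing only an $O(h)$ term. I claim that if $u,v$ with $\da{u}>\da{v}$ satisfy $\da{u}-\da{v}\ge c\,\tau$ for some real $c\ge1$, then $\prob{\db{u}-\db{v}\le k}\le e^{-c\eh}$. To prove it, apply \hyperref[lemma:HDMatching2]{Lemma \ref*{lemma:HDMatching2}} with the parameter $c\eh$ in place of $\eh$: its hypothesis then requires a gap of at least $(1-\varepsilon)^{-1}\pth{k+4\max\acc{c\eh,\sqrt{\varphi\, c\eh}}}$, and since $c\ge1$ we have $c\,k\ge k$ and $c\max\acc{\eh,\sqrt{\varphi\eh}}\ge\max\acc{c\eh,\sqrt{c\varphi\eh}}$ (the $\sqrt{\cdot}$ branch uses $c\sqrt{\varphi\eh}\ge\sqrt c\,\sqrt{\varphi\eh}=\sqrt{c\varphi\eh}$), so $c\,\tau$ already exceeds the required threshold.

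I then apply this to two families of pairs. For the $h$ consecutive pairs $(u_i,u_{i+1})$, $i\in[h]$, condition \eqref{corollary:HDMatchingN:cond1} gives a gap of at least $\tau$ (the case $c=1$), so each fails with probability at most $e^{-\eh}$. For a skip pair $(u_h,u_j)$ with $j\ge h+2$, telescoping gives $\da{u_h}-\da{u_j}=\sum_{i=h}^{j-1}(\degseq_{a,i}-\degseq_{a,i+1})$; since \eqref{corollary:HDMatchingN:cond1} holds for all $i\in[s]$ and the remaining summands are nonnegative, this is at least $(\min\acc{j,s+1}-h)\tau$, so $\prob{\db{u_h}-\db{u_j}\le k}\le e^{-(\min\acc{j,s+1}-h)\eh}$. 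The hypothesis $s\ge h+\frac1\eh\log(n/h)+1$ is calibrated precisely so that for the tail $j>s+1$ each term is at most $e^{-(s+1-h)\eh}\le\frac hn e^{-2\eh}$, of which there are fewer than $n$.

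Finally I collect the union bound: the consecutive pairs contribute at most $h\,e^{-\eh}$; the skip pairs with $h+2\le j\le s+1$ contribute $\sum_{m\ge2}e^{-m\eh}\le\frac{e^{-2\eh}}{1-e^{-\eh}}$; and the tail contributes at most $h\,e^{-2\eh}$. A short computation (writing each term over the common denominator $1-e^{-\eh}$) shows $h\,e^{-\eh}+\frac{e^{-2\eh}}{1-e^{-\eh}}+h\,e^{-2\eh}\le\frac{2h\,e^{-\eh}}{1-e^{-\eh}}$, so the good event holds with probability at least $1-\frac{2h\,e^{-\eh}}{1-e^{-\eh}}=\frac{1-(2h+1)e^{-\eh}}{1-e^{-\eh}}$, as claimed. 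On this event $f_h(\Gb)=(u_1,\dots,u_h)=f_h(\Ga)$, and $\degseq_{b,i}-\degseq_{b,i+1}=\db{u_i}-\db{u_{i+1}}>k$ for $i\in[h]$ (for $i=h$ using that $\degseq_{b,h+1}=\max_{j>h}\db{u_j}$). The substantive work is entirely in the rescaling claim; everything after it is bookkeeping of a geometric series.
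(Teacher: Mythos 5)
Your proposal is correct and follows essentially the same route as the paper's own proof: the same application of Lemma \ref{lemma:HDMatching2} with the rescaled parameter (a gap of $c\tau$ yielding failure probability $e^{-c\eh}$ via telescoping consecutive gaps), the same split into consecutive pairs among the top vertices plus skip pairs $(u_h,u_j)$ with geometrically decaying contributions, and the same use of the hypothesis on $s$ to control the tail over all $n$ vertices. The only differences are cosmetic bookkeeping — you fold the pair $(u_h,u_{h+1})$ into the consecutive family rather than the ``low'' event, state the rescaling claim explicitly for real $c\geq 1$, and your final arithmetic in fact recovers the stated constant $(2h+1)$ exactly.
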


\bcomment{
\begin{proof}
  Let \(H_a\) and \(S_a\) be the set of \(h\) and \(s\) highest-degree vertices in \(\Ga\) respectively and define \(H_b\) analogously for \(G_b\).
  The following two events collectively imply \(f_{h}(\Ga)=f_{h}(\Gb)\) and \(\degseq_{b,i} - \degseq_{b,i+1} > k\) for any \(i\in[h]\).
  \begin{itemize}
  \item Let \(\mcE^{\text{high}}\) be the event that vertices in \(H_a\)  have the same degree ordering in \(\Ga\) and in \(\Gb\) as well as a minimum degree separation larger than \(k\) in \(G_b\). Note that this does not guarantee \(H_a = H_b\).
  \item Let \(\mcE^{\text{low}}\) be the event that all vertices in \(V\diffS H_a\) have degree less than \(\degseq_{b,h}-k\) in \(\Gb\), i.e. no vertex from \(V\diffS H_a\) is in \(H_b\) and all have a sufficiently large degree separation with the \(h\)-th highest-degree vertex.
  \end{itemize}


  First we consider \(\mcE^{\text{high}}\), i.e. the event where \(\degseq_{b,i}-\degseq_{b,j}>k\) for any \(i < j\) with \(i,j\in[h]\).
  Notice that it is sufficient to check this condition for consecutive pairs of vertices in the degree sequence.
  Given the condition in \hyperref[corollary:HDMatchingN:cond1]{(\ref*{corollary:HDMatchingN:cond1})}, \hyperref[lemma:HDMatching2]{Lemma \ref*{lemma:HDMatching2}} states that for any pair of vertices \(v_i,v_{i+1}\in H_a\), \(v_i\) and \(v_{i+1}\) in \(\Gb\) have the same degree ordering as well as a degree separation larger than \(k\) with probability at least \(e^{-\eh}\).
  Thus, by the union bound, we get \(\prob{\overline{\mcE^{\text{high}}}} \leq 1-h e^{-\eh}\).
  
  Second we consider \(\mcE^{\text{low}}\), i.e. the event where \(\degseq_{b,h}-\degseq_{b,i}>k\) for any \(i\in [n]\diffS[h]\).
  By the condition in \hyperref[corollary:HDMatchingN:cond1]{(\ref*{corollary:HDMatchingN:cond1})} we have, \(\forall i\in [s] \setminus [h]\),
  \begin{align*}
    \degseq_{a,h} - \degseq_{a,{i}} &\geq (i-h)(k+4\max\acc{\eh,\sqrt{\varphi\cdot\eh}})(1-\varepsilon)^{-1}\\
    &\geq \pth{k+4\max\acc{(i-h)\eh,\sqrt{(i-h)\varphi\cdot\eh}}}(1-\varepsilon)^{-1}
    \intertext{and \(\forall i\in [n] \setminus [s]\),}
    \degseq_{a,h} - \degseq_{a,{i}} &\geq (s+1-h)(k+4t\max\acc{\eh,\sqrt{\varphi\cdot\eh}})(1-\varepsilon)^{-1}\\
    &\geq \pth{k+4\max\acc{(s+1-h)\eh,\sqrt{(s+1-h)\varphi\cdot\eh}}}(1-\varepsilon)^{-1}.
  \end{align*}
  By \hyperref[lemma:HDMatching2]{Lemma \ref*{lemma:HDMatching2}} we then have
  \begin{align*}
      \prob{\degseq_{a,h} - \degseq_{a,{i}} \leq k} \leq \exp(-\eh \min\{i-h,s+1-h\}).
  \end{align*}
  Then, by the union bound, 
  \begin{align*}
    \prob{\overline{\mcE^{\text{low}}}}
    &\leq \sum_{i=h+1}^{s}e^{-\eh(i-h)} + \sum_{i=s+1}^n e^{-\eh(s+1-h)}\\
    &\leq \frac{e^{-\eh}}{1-e^{-\eh}} + (n-s)\frac{h}{n}e^{-\eh}
  \end{align*}
  Applying the union bound again we obtain
  \[
    \prob{\overline{\mcE^{\text{high}}} \vee \overline{\mcE^{\text{low}}}} \leq 
    (2h+1)e^{-\eh}/(1-e^{-\eh}).
    \]
\end{proof}
}

The \(\varphi\) term in \hyperref[corollary:HDMatchingN]{Corollary \ref*{corollary:HDMatchingN}} corresponds to an upper bound for the same term in \hyperref[lemma:HDMatching2]{Lemma \ref*{lemma:HDMatching2}} that we obtain by replacing the vertex degree with the max degree in the graph, and the inverse degree with \(n\). We then need the following upper bound on the maximum degree of a random graph.

\begin{lemma}
  \label{lemma:maximumDegree}
  Let \(G\sim ER(n;p)\) with \(p \geq \omega\pth{\frac{\log n}{n}}\).
  For any constant \(\epsilon > 0\), we have \(P[\Delta(G) \geq pn(1+\epsilon)] \leq o(1)\).
\end{lemma}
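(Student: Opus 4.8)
The plan is to reduce the statement about the maximum degree to a single-vertex tail bound combined with a union bound. For each fixed vertex $v$, its degree $\dG{v}$ is distributed as $\textrm{Bin}(n-1;p)$, with mean $\mu \define (n-1)p$. Since the threshold satisfies $pn(1+\epsilon) \geq (1+\epsilon)\mu$, it suffices to control $\prob{\dG{v} \geq (1+\epsilon)\mu}$ for a single vertex and then sum over the $n$ vertices.

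First I would invoke the multiplicative Chernoff upper-tail bound for a binomial random variable: for $X \sim \textrm{Bin}(n-1;p)$ and any constant $\epsilon > 0$,
\begin{align*}
  \prob{X \geq (1+\epsilon)\mu} \leq \exp\pth{-\frac{\epsilon^2}{2+\epsilon}\,\mu}.
\end{align*}
Because $\epsilon$ is a fixed constant, the prefactor $c_\epsilon \define \frac{\epsilon^2}{2+\epsilon}$ is a positive constant, so the per-vertex failure probability is at most $\exp\pth{-c_\epsilon (n-1)p}$.

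The key step is to observe where the hypothesis $p \geq \omega\pth{\frac{\log n}{n}}$ enters. This assumption gives $(n-1)p = \omega(\log n)$, hence $c_\epsilon (n-1)p = \omega(\log n)$, and therefore the single-vertex bound is $\exp\pth{-\omega(\log n)} = n^{-\omega(1)}$. Applying the union bound over all $n$ vertices yields
\begin{align*}
  \prob{\Delta(G) \geq pn(1+\epsilon)} \leq n \cdot \exp\pth{-c_\epsilon (n-1)p} = n \cdot n^{-\omega(1)} = n^{1-\omega(1)} = o(1),
\end{align*}
which is the claim. I would take care with the minor bookkeeping that $pn(1+\epsilon)$ slightly exceeds $(1+\epsilon)(n-1)p$ (so dropping to $\mu = (n-1)p$ in the Chernoff bound is valid and only makes the tail larger), and that $(n-1)p$ and $np$ differ only by a $1-o(1)$ factor, which is absorbed into the $\omega(\cdot)$ asymptotics.

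There is no serious obstacle here; the only thing to get right is the interplay between the exponential decay of the single-vertex tail and the linear loss from the union bound. The hypothesis $p = \omega(\log n/n)$ is precisely the threshold at which the Chernoff exponent $c_\epsilon (n-1)p$ grows faster than $\log n$, so the factor of $n$ from the union bound is overwhelmed and the whole probability vanishes. Were $p$ only of order $\frac{\log n}{n}$, the exponent would be $\Theta(\log n)$ and one would need the constant $c_\epsilon$ to exceed $1$ to conclude, so the strict $\omega$ in the hypothesis is exactly what makes the argument go through for every constant $\epsilon > 0$.
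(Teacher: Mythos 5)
Your proposal is correct and takes essentially the same route as the paper: both reduce to a Chernoff tail bound for a single $\mathrm{Bin}(n-1;p)$ degree with exponent a positive constant (depending on $\epsilon$) times $(n-1)p = \omega(\log n)$, followed by a union bound over the $n$ vertices. The only cosmetic difference is that the paper derives its Chernoff bound inline by optimizing the moment generating function, arriving at the exponent $-(n-1)p\left(\sqrt{1+\epsilon}-1\right)^2$, whereas you invoke the standard multiplicative form with exponent $-\frac{\epsilon^2}{2+\epsilon}(n-1)p$; both are constants times $(n-1)p$, so the conclusions coincide.
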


\hyperref[corollary:HDMatchingN]{Corollary \ref*{corollary:HDMatchingN}} relies on \(\Ga\) having a degree sequence whose largest terms are sufficiently separated. We now present a condition that guarantees a given degree separation for almost all random graphs.

\begin{theorem}
  \label{theorem:Bollobas}
  (\cite{bollobas} Theorem 3.15) Let \(h\in\natS\) and \(c\in\realS^+\) functions of \(n\) such that \(h=o(n)\) and \(c=o(1)\). Then, with probability \(1-o(1)\), in \(G\sim ER(n,p)\)  
  \[
    \degseq_i - \degseq_{i+1} \geq \frac{c}{h^2}\pth{\frac{np(1-p)}{\log n}}^{1/2} \quad \textrm{for each } i\in [h].
  \]
\end{theorem}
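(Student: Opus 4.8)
This is the classical degree-gap estimate of Bollobás, so the cleanest route is a first-moment (union) bound over ``close pairs'' of high-degree vertices, combined with a localization of where the top degrees sit. Throughout write $\mu = (n-1)p$ and $\sigma = \sqrt{(n-1)p(1-p)}$, and set $g = \frac{c}{h^2}(np(1-p)/\log n)^{1/2} = \Theta\!\big(\frac{c\,\sigma}{h^2\sqrt{\log n}}\big)$ for the target gap. The one tool I would rely on is the local central limit theorem (de Moivre--Laplace) for the degree $d_G(v)\sim\mathrm{Bin}(n-1,p)$, giving $P[d_G(v)=d] = (1+o(1))\frac{1}{\sigma\sqrt{2\pi}}\exp\!\big(-\frac{(d-\mu)^2}{2\sigma^2}\big)$ uniformly for $d$ within $O(\sigma\sqrt{\log n})$ of $\mu$, which is valid in the density range where the statement is applied.

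First I would localize the top $h+1$ degrees. Choose a threshold $t$ with $m := n\,P[\mathrm{Bin}(n-1,p)\ge t]$ satisfying $h \ll m = o(n)$ and $m = o(h/\sqrt{c})$ (possible since $c=o(1)$). Writing $t = \mu + x_0\sigma$ pins $x_0 = (1+o(1))\sqrt{2\log(n/m)} = \Theta(\sqrt{\log n})$. The count $N_{\ge t} = \#\{v : d_G(v)\ge t\}$ is a sum of $n$ indicators whose pairwise covariances are negligible (two degrees share only the single edge between them, a correlation of order $1/\sigma^2$), so $\mathrm{Var}(N_{\ge t}) = (1+o(1))\,m$, and Chebyshev gives $P[N_{\ge t}\le h] = P[\delta_{h+1} < t] = o(1)$. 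Hence with high probability all of $\delta_1,\dots,\delta_{h+1}$ lie in $[t,\infty)$.

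The heart of the argument is a first-moment count. Let $Z$ be the number of unordered pairs $\{u,v\}$ with $d_G(u),d_G(v)\ge t$ and $|d_G(u)-d_G(v)| < g$. Conditioning on the single shared edge $\{u,v\}$ decouples the two degrees up to a shift of at most one, so $\mathbb{E}[Z] \le (1+o(1))\binom{n}{2}\sum_{d\ge t}P[d_G(v)=d]\,P[d_G(v)\in(d-g,d+g)]$. Using $P[d_G(v)\in(d-g,d+g)] = \Theta(g\,P[d_G(v)=d])$ together with the local limit estimate one finds $\sum_{d\ge t}P[d_G(v)=d]^2 = \Theta\!\big(\frac{x_0 m^2}{\sigma n^2}\big)$, whence $\mathbb{E}[Z] = \Theta\!\big(\frac{g\,x_0\,m^2}{\sigma}\big) = \Theta\!\big(c\,(m/h)^2\big) = o(1)$ by the choice of $m$ and $c=o(1)$. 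Markov's inequality then gives $Z=0$ with high probability; the integral extends to $+\infty$ and the upper tail contributes negligibly, so no separate upper cutoff is needed.

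Finally I would assemble the two events. On the intersection of $\{\delta_{h+1}\ge t\}$ and $\{Z=0\}$ every pair among the top $h+1$ vertices has degrees differing by at least $g$, which is exactly $\delta_i-\delta_{i+1}\ge g$ for all $i\in[h]$; both events hold with probability $1-o(1)$. The one genuinely delicate point is the first-moment computation: it requires the local limit theorem to be accurate (not merely a tail bound) uniformly over the relevant degree window, and it requires the near-independence of two vertices' degrees to be controlled --- both places where the sparsity hypotheses enter. The birthday-paradox heuristic quoted in the outline is precisely this estimate $\mathbb{E}[Z] = \Theta(c\,(m/h)^2)$: the $1/h^2$ scaling of $g$ is exactly what forces the expected number of close pairs among the top vertices to vanish.
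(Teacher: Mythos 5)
The paper itself offers no proof of Theorem \ref{theorem:Bollobas}: it is imported verbatim from \cite{bollobas} (Theorem 3.15) and used as a black box, so your attempt can only be measured against the classical argument. Your outline is essentially that argument: a threshold $t$ chosen so that the expected number $m$ of vertices above it satisfies $h \ll m \ll h/\sqrt{c}$, a second-moment (Chebyshev) step showing $\delta_{h+1} \geq t$ with high probability, and a first-moment bound on the number $Z$ of pairs above the threshold whose degrees differ by less than $g$. Your accounting $\E{Z} = \Theta\pth{c\,(m/h)^2}$ is the correct birthday-paradox computation, and the decoupling of two degrees by conditioning on their shared edge is handled correctly. (One small imprecision: $\mathrm{Var}(N_{\geq t}) = (1+o(1))m$ actually needs $m\log n = o(n)$; this is harmless because the covariance term contributes only $O(x_0^2/n) = o(1)$ to the Chebyshev bound in any case.)

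The genuine gap is the estimate $\prob{d_G(v)\in(d-g,d+g)} = \Theta\pth{g\,\prob{d_G(v)=d}}$, which fails when $g<1$: the interval then contains the single integer $d$, the probability equals $\prob{d_G(v)=d}$, and the correct bound is $O\pth{(g+1)\prob{d_G(v)=d}}$. The extra additive term leaves $\E{Z} = \Theta\pth{x_0 m^2/\sigma}$ when $g \leq 1$, and nothing in the stated hypotheses ($h = o(n)$, $c = o(1)$, no condition on $p$) makes this $o(1)$, since $m \gg h$ and $h^2\sqrt{\log n}$ need not be small compared to $\sigma$. This is not a repairable technicality, because in that regime the statement as transcribed is simply false: for constant $p$ and $h = n^{0.9}$ the top $h+1$ degrees all lie, with high probability, in an interval of length $O(\sqrt{n\log n}) \ll h$, so two of them must coincide and the (positive) required gap is violated; for $p = n^{-3}$ the graph is empty with high probability and every gap is zero. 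Bollob\'as's actual theorem carries the hypotheses your sketch uses silently: a density condition $np(1-p)/\log n \to \infty$ (your local CLT at $x_0 = \Theta(\sqrt{\log n})$ moreover wants $x_0^3/\sigma = o(1)$, i.e.\ $np(1-p) \gg \log^3 n$), and a restriction keeping the required gap of order at least one, i.e.\ $h^2 \lesssim c\pth{np(1-p)/\log n}^{1/2}$. Your proof is valid exactly under those additional hypotheses --- which is also the only regime in which the paper invokes the theorem, since in the proof of Theorem \ref{theorem:HDMatching} the required separation is $2+4\max\acc{\eta,\sqrt{\varphi\cdot\eta}} \geq 2$ and condition (\ref{limit:h2sqrt-1}) enforces the corresponding bound on $h$. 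To make your writeup correct, state these hypotheses explicitly instead of appealing to ``the density range where the statement is applied.''
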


\bcomment{
\begin{lemma}
  \label{lemma:collision}
  Let \(h\in\natS\), \(c \in\natS\), and \(\epsilon \in\realS^+\) functions of \(n\) such that \(h=o(n)\) and \(c=o(1)\). Then, with probability \(1-\epsilon\), in \(G\sim ER(n,p)\)  
  \[
    \degseq_i - \degseq_{i+1} \geq \frac{c \epsilon}{h^2}\pth{\frac{np(1-p)}{\log n}}^{1/2} \quad \textrm{for each } i\in [h].
  \]
\end{lemma}
\begin{proof}
  For vertices \(u\) and \(v\) in \(V(G)\), we need an upper bound on the probability that \(d_G(u) \geq t\), \(d_G(v) \geq t\), and \(|d_G(u) - d_G(v)| < s\).
  We have
  \[
    P[d_G(v) \geq t, |d_G(u) - d_G(v)| < s | d_G(u)] \leq 2s P[d_G(v) = t-1]
  \]
\end{proof}
}

We are now in a position to present a result on the performance of the high-degree matching step of our algorithm.
First we define the three events that are needed to be able to successfully align the high-degree vertices: the set of high-degree vertices must be the same in the two graphs and in each graph the high-degree vertices must have sufficiently separated degrees. Distinct degrees are clearly required, but we require the stronger condition that degrees have difference of at least 3. This allows us to establish the independence of this stage of the algorithm with the bipartite matching stage later in \hyperref[subsec-general]{Subsection \ref*{subsec-general}}.


\begin{definition}
\label{HD-events}

Let \(\mcE^{\operatorname{H}}\) be the event that the lists of the \(h\) highest-degree vertices in \(\Ga\) and \(\Gb\) are the same, i.e. \(f_h(\Ga) = f_h(\Gb)\).
This is the ``high-degree match'' event.
Let \(\mcE_a^{\operatorname{S}}\) be the event that \(\degseq_{a,i} > \degseq_{a,i+1} + 2\) for all \(i \in [h]\).
Define \(\mcE_b^{\operatorname{S}}\) analogously for \(\delta_b\).
These are the ``degree separation'' events. 
\end{definition}
\begin{theorem}
  \label{theorem:HDMatching}
  Let \((\Ga,\Gb)\sim ER(n,\pvec)\) where \(\pvec\) is a function of \(n\) such that \(p_{00}=1-o(1)\). Moreover let \(h\in[n]\) such that \(\omega(\log n) \leq h \leq o(n)\).
  If
  \begin{align}
    \label{limit:h2sqrt-1} \max\acc{(\log h)^2, n(\pol + \plo)\log h} &\leq o\pth{\frac{n\,\pll}{h^4\log n}\cdot\frac{\pll}{\plx}},
  \end{align}
  then \(P\croc{\mcE^H \wedge \mcE_a^{\operatorname{S}} \wedge \mcE_b^{\operatorname{S}}} \geq 1-o(1)\).
\end{theorem}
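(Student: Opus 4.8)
The plan is to assemble the conclusion from the three preceding results—Lemma~\ref{lemma:maximumDegree}, Theorem~\ref{theorem:Bollobas}, and Corollary~\ref{corollary:HDMatchingN}—after fixing the free parameters appropriately. I would take the target separation to be $k = 2$ (so that the Corollary's conclusion $\degseq_{b,i} - \degseq_{b,i+1} > 2$ is exactly $\mcE_b^{\operatorname{S}}$), choose the Chernoff exponent $\eh = \log h + \omega(1)$ so that the Corollary's success probability $\frac{1 - (2h+1)e^{-\eh}}{1 - e^{-\eh}}$ is $1 - o(1)$, and set $s = \lceil h + \tfrac{1}{\eh}\log(n/h) + 1\rceil$. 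Since $h = \omega(\log n)$, one checks $\tfrac{1}{\eh}\log(n/h) = o(h)$, so $s = h(1+o(1))$ is $o(n)$ and $s^4 = \Theta(h^4)$; this will matter later. I would also extract from \eqref{limit:h2sqrt-1} the consequences $\plo + \pol = o(\pll)$, hence $\plx = \pll(1+o(1))$ and $\varepsilon = \plox + \polx = o(1)$, as well as $\pll \geq \omega(\log n/n)$, so that $\Ga \sim ER(n,\plx)$ is dense enough for Lemma~\ref{lemma:maximumDegree}.

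Next I would control $\Ga$. By Lemma~\ref{lemma:maximumDegree}, $\Delta(\Ga) \leq \plx n(1+\epsilon)$ with probability $1-o(1)$; on this event the uniform quantity $\varphi = \Delta(\Ga)\plox + n\polx$ of Corollary~\ref{corollary:HDMatchingN} satisfies $\varphi \leq n\plx(1+\epsilon)\plox + n\polx = \Theta\pth{n(\plo+\pol)}$, using $\plox = \plo/\plx$, $\polx = \pol/\pox$, and $\pox = 1-o(1)$. Applying Theorem~\ref{theorem:Bollobas} to $\Ga$ with its parameter taken to be $s$ and a suitable $c = o(1)$ (chosen below) gives, with probability $1-o(1)$, the separation $\degseq_{a,i} - \degseq_{a,i+1} \geq \frac{c}{s^2}\sqrt{\tfrac{n\plx(1-\plx)}{\log n}}$ for all $i \in [s]$, where $1-\plx = 1-o(1)$. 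Once the next step shows this lower bound exceeds the threshold $(1-\varepsilon)^{-1}\pth{2 + 4\max\acc{\eh, \sqrt{\varphi\,\eh}}} > 2$, integrality of degree differences yields $\mcE_a^{\operatorname{S}}$ on the same event.

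The crux is verifying that the Bollobás separation dominates $(1-\varepsilon)^{-1}\pth{2 + 4\max\acc{\eh, \sqrt{\varphi\,\eh}}}$ for all $i \in [s]$, which is precisely hypothesis \eqref{corollary:HDMatchingN:cond1}. Since $(1-\varepsilon)^{-1} = 1+o(1)$ and $4\eh = \Theta(\log h) \to \infty$ swamps the additive constant $2$, it suffices to show $\frac{c}{s^2}\sqrt{n\plx/\log n} \gtrsim \max\acc{\eh, \sqrt{\varphi\,\eh}}$. Squaring and substituting $s^4 = \Theta(h^4)$, $\eh = \Theta(\log h)$, and $\varphi = \Theta(n(\plo+\pol))$, this splits into exactly the two branches $(\log h)^2 \lesssim c^2\,\tfrac{n\plx}{h^4\log n}$ and $n(\plo+\pol)\log h \lesssim c^2\,\tfrac{n\plx}{h^4\log n}$. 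Now \eqref{limit:h2sqrt-1} asserts that $\max\acc{(\log h)^2, n(\pol+\plo)\log h}$ is $o\pth{\tfrac{n\pll}{h^4\log n}\cdot\tfrac{\pll}{\plx}}$, and since $\tfrac{n\pll}{h^4\log n}\cdot\tfrac{\pll}{\plx} = (1-o(1))\tfrac{n\plx}{h^4\log n}$, the two required right-hand sides coincide up to the factor $c^2$. Writing $\max\acc{(\log h)^2, n(\pol+\plo)\log h} = g(n)\,\tfrac{n\plx}{h^4\log n}$ with $g(n)\to 0$, the choice $c = g(n)^{1/4}$ is $o(1)$ (so Theorem~\ref{theorem:Bollobas} applies) while $c^2 = g(n)^{1/2} \gg g(n)$, so both branches hold. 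This matching of the two $\max$-terms to the two terms of \eqref{limit:h2sqrt-1}, together with the bookkeeping of the $o(1)$ factors and the slowly decaying choice of $c$, is the main obstacle.

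Finally I would assemble the pieces. Let $A$ be the $\Ga$-measurable event that $\Delta(\Ga) \leq \plx n(1+\epsilon)$ and that the Bollobás separation above holds; the previous two paragraphs give $P[A] = 1-o(1)$, and on $A$ both $\mcE_a^{\operatorname{S}}$ and the deterministic hypothesis \eqref{corollary:HDMatchingN:cond1} hold (using $\varphi \leq \Delta(\Ga)\plox + n\polx$). Conditioning on any such $\Ga$, Corollary~\ref{corollary:HDMatchingN} with $k=2$ gives $P\croc{\mcE^{\operatorname{H}} \wedge \mcE_b^{\operatorname{S}} \mid \Ga} \geq \frac{1 - (2h+1)e^{-\eh}}{1 - e^{-\eh}} = 1-o(1)$, since $(2h+1)e^{-\eh} = o(1)$. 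Taking expectation over $\Ga \in A$ yields $P\croc{\mcE^{\operatorname{H}} \wedge \mcE_a^{\operatorname{S}} \wedge \mcE_b^{\operatorname{S}}} \geq (1-o(1))\,P[A] = 1-o(1)$, as claimed.
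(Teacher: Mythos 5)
Your proposal is correct and follows essentially the same route as the paper's proof: choose $k=2$, $\eta = \log h + \omega(1) \leq \mathcal{O}(\log h)$, and $s = h(1+o(1))$; bound $\varphi \leq \mathcal{O}(n(\plo+\pol))$ via Lemma~\ref{lemma:maximumDegree}; invoke Theorem~\ref{theorem:Bollobas} with a $c = o(1)$ extracted from hypothesis \eqref{limit:h2sqrt-1}; and feed the resulting separation into Corollary~\ref{corollary:HDMatchingN}. The only differences are cosmetic (the paper defines $c$ as the exact ratio of the required separation to the Bollob\'as bound and shows $c = o(1)$, whereas you set $c = g(n)^{1/4}$ and verify the two branches; your explicit conditioning on the $\Ga$-measurable event $A$ is a slightly more careful write-up of the same final assembly).
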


\begin{proof}
  To apply \hyperref[corollary:HDMatchingN]{Corollary \ref*{corollary:HDMatchingN}}, \(\eta\) and \(s\) must satisfy \(s \geq h + \frac{1}{\eh}\log \pth{\frac{n}{h}}\).
  We pick \(\eta\) such that \(s = \ceil{h + \frac{1}{\eh}\log \pth{\frac{n}{h}}}\) and \(\log h + \omega(1) \leq \eta \leq \mathcal{O}(\log h)\).
  The condition \(h \geq \omega(\log n)\) guarantees that \(s \leq h(1+o(1))\).

  Applying \hyperref[lemma:maximumDegree]{Lemma \ref*{lemma:maximumDegree}} , we have
  \begin{align}
    \label{ineq:rho}
      \varphi = \Delta(\Ga)\plox + n\polx \leq (1+\epsilon)n\plo + n \frac{\pol}{\pox} \leq (1+\epsilon+o(1))n(\plo+\pol).
  \end{align}
  
  Define \(c \define \pth{\frac{s^4\log n}{n\,\plx}}^{1/2}\pth{1-\polx-\plox}^{-1}\pth{2+4\max\{\eta, (r\eta)^{1/2}\}}\). By \(\poo = 1-o(1)\) we have \(\pth{1-\plox-\polx}^{-1} = \pth{\frac{\pll}{\plx}-o(1)}^{-1}=\Theta\pth{\frac{\plx}{\pll}}\). Together with the upper bounds on \(\eta\), \(h\), and \(s\), we get
  \begin{align*}
    c \leq& \mathcal{O}(1)\pth{\frac{ h^4 \log n}{n\pll}\cdot\frac{\plx}{\pll}}^{1/2}\max\acc{\log h, (n(\pol + \plo)\log h)^{1/2}}.
  \end{align*}
  From \hyperref[limit:h2sqrt-1]{(\ref*{limit:h2sqrt-1})}, we have \(c \leq o(1)\).

  By \hyperref[theorem:Bollobas]{Theorem \ref*{theorem:Bollobas}}, with probability \(1-o(1)\), we have a minimum separation of \(2+4\max\{\eta, (\varphi\cdot\eta)^{1/2}\}\) among the top \(s\) degrees in \(\Ga\sim ER(n;\plx)\).
  Then \hyperref[corollary:HDMatchingN]{Corollary \ref*{corollary:HDMatchingN}} implies that the probability that \(f_h(\Ga) \neq f_h(\Gb)\) is at most \(se^{-\eh} \leq (1+o(1))he^{-\log h -\omega(1)} \leq o(1)\).  
\end{proof}


\subsection{Bipartite graph alignment}
\label{subsec-bipartite}

We will need the following method of specifying an induced bipartite subgraph.
Let \(G\) be a graph on the vertex set \(V\) and let \(U \subseteq V\).
Let \(\wvec\) be a vector of \(h\) distinct vertices in \(V\setminus U\).
Define \(G[U,\wvec]\) to be the bipartite graph with left vertex set \(U\), right vertex set \([h]\), and edge set
\[
  E(G[U,\wvec]) = \{(u,j) \in U \times [h] : (u,w_j) \in E(G)\}).
\]

Recall that in \hyperref[alg]{Algorithm \ref*{alg}}, we have \(\wvec_{a} = f_{h}(\Ga)\) and \(\wvec_{b} = f_{h}(\Gb)\). By \hyperref[def:sig]{Definition \ref*{def:sig}}, the signature of any \(u\in U\) is the edge indicator function for \(G_a[\{u\},\wvec_a]\):
\begin{align*}
  \Sa{u} \in \{0,1\}^h \,\, \textrm{ and } \,\, \Sa{u}_i &= \ind{(u,i) \in E(G_a[\{u\},\wvec_a])}.
\end{align*}
We define an analogous signature scheme for bipartite graphs to be used for the bipartite alignment step.

\begin{definition}
  Given the bipartite graph \(B=(V,[h];E)\), the bipartite signature function \(\textrm{sig}_B'\) takes as input vertex \(u\in V\) and returns the signature label of the vertex such that
  \begin{align*}
    \SB{u} \in \{0,1\}^h \quad \textrm{ and } \quad \SB{u}_i = \ind{(u,i) \in E}  
  \end{align*}
  When referring to signatures on bipartite graphs that are distinguished only by their subscripts (e.g. \(B_a\) and \(B_b\)) we only denote the signatures in shorthand notation, e.g. \(\textrm{sig}_a'(u) = \textrm{sig}_{B_a}'(u)\), \(\textrm{sig}_b'(u) = \textrm{sig}_{B_b}'(u)\).
\end{definition}

We restate the second half of \hyperref[alg]{Algorithm \ref*{alg}} as the bipartite graph alignment algorithm in \hyperref[alg:bip]{Algorithm \ref*{alg:bip}}

\begin{algorithm}[H]\captionsetup{labelfont={sc,bf}}
\begin{algorithmic}[1]
\begin{small}
\FOR {vertex \(v \in \Vb\)}
\STATE{\(\widehat{M}(v) = \argmin_{u \in \Va} \norm{\SBa{u}-\SBb{v}}\)}
\ENDFOR
 \caption{
  Bipartite Graph alignment \\
  {\bf Input:} {\(B_a=(V_a,[h];\Ea)\), \(B_b=(V_b,[h];\Eb)\)}\\
  {\bf Output:} Estimated alignment \(\widehat{M}: V_b \to V_a\)
 }
 \label{alg:bip}
 \end{small}
\end{algorithmic}
\end{algorithm}

Suppose that we have bipartite graphs \(B_a=(V_a,[h];\Ea)\) and \(B_b=(V_b,[h];\Eb)\) such that \(|V_a|=|V_b|\). Assume there is an exact correspondence between the vertex sets, expressed by the alignment \(M:V_b \to V_a\). \hyperref[alg:bip]{Algorithm \ref*{alg:bip}} is guaranteed to map vertex \(u\in V_b\) to \(M(u)\in V_a\) if
\begin{align}
\label{cond:bipNecessary}
    \norm{\SBa{M(v)} - \SBb{u}} > \norm{\SBa{M(u)}-\SBb{u}}
\end{align}
for any \(v\in V_b\diffS \{u\}\).
Hence verifying the equality above for any ordered pair of vertices \((u,v)\in V_b^2\) guarantees that the algorithm perfectly aligns all vertices.

In the remainder of the section, in order to avoid cumbersome notation, we assume that, without loss of generality, \(V_a = V_b = V\) and the true alignment is the trivial alignment \(M(v) = v\) for any \(v\in V\).

To analyze \hyperref[alg:bip]{Algorithm \ref*{alg:bip}} for random bipartite graphs, we need the following lemma which bounds the probability that a pair of vertices are misaligned. This corresponds to the failure of \hyperref[cond:bipNecessary]{(\ref*{cond:bipNecessary})} for either one of the vertices.
\begin{lemma}
  \label{lemma:bip-error}
  
  Let bipartite graphs \(B_a = (\{u,v\},[h];\Ea)\) and \(B_a = (\{u,v\},[h];\Eb)\) be distributed according to \((B_a,B_b)\sim ER(2,h;\pvec)\).
  
  Define \(\mcE^{\operatorname{M}}(B_a,B_b)\) to be the ``misalignment event'' i.e. the event where either of the following inequalities hold:
  \begin{align*}
      &\norm{\SBa{v} - \SBb{u}} \leq \norm{\SBa{u}-\SBb{u}}\\
      \textrm{ or } \quad &\norm{\SBa{u} - \SBb{v}} \leq \norm{\SBa{v}-\SBb{v}}.
  \end{align*}
  Then \(\prob{\mcE^{\operatorname{M}}(B_a,B_b)} \leq 2\exp\pth{-h\rho^2}\) where 
  \begin{align*}
      \rho \define \sqrt{\poo\plx + \pll \pox}-\sqrt{\plo\pox + \pol\plx}.
  \end{align*}
\end{lemma}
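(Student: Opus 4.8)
The plan is to bound each of the two inequalities defining $\mcE^{\operatorname{M}}$ separately and combine them by a union bound. Since the correlated bipartite model treats $u$ and $v$ as exchangeable generic left vertices, the two events have identical probability, so it suffices to bound the first one, $\norm{\SBa{v}-\SBb{u}} \leq \norm{\SBa{u}-\SBb{u}}$, by $\exp\pth{-h\rho^2}$; the factor of $2$ then comes from the union bound.

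First I would write the two Hamming distances coordinate-by-coordinate. Denoting by $a_w^i$ and $b_w^i$ the edge indicators of $(w,i)$ in $B_a$ and $B_b$, and introducing for each $i\in[h]$ the increment $X_i \define \ind{a_v^i \neq b_u^i} - \ind{a_u^i \neq b_u^i}$, the first inequality becomes exactly $\sum_{i=1}^h X_i \leq 0$. The key structural step is to observe that the $X_i$ are i.i.d.: across coordinates all edge variables are independent, and within a single coordinate the pair $(a_u^i,b_u^i)$ is the correlated pair attached to vertex $u$, while $a_v^i$ belongs to a different left vertex and is therefore independent of $(a_u^i,b_u^i)$, with marginal $\prob{a_v^i=1}=\plx$.

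Next I would compute the law of $X_1$ by enumerating the eight outcomes of the triple $(a_u^1,b_u^1,a_v^1)$. The two indicators forming $X_1$ share the variable $b_u^1$, so one must keep the triple coupled rather than treating the indicators independently; carrying out the case analysis yields $X_1\in\{-1,0,1\}$ with
\begin{align*}
  \prob{X_1=+1} &= \poo\,\plx + \pll\,\pox, \\
  \prob{X_1=-1} &= \plo\,\pox + \pol\,\plx,
\end{align*}
so that the two expressions under the square roots in $\rho$ are precisely $\prob{X_1=1}$ and $\prob{X_1=-1}$. The proof then closes with a Chernoff bound: for $t\geq 0$, $\prob{\sum_i X_i \leq 0} \leq \E{e^{-tX_1}}^h$, and writing $q_+\define\prob{X_1=1}$, $q_-\define\prob{X_1=-1}$, the moment generating function $q_+e^{-t}+q_-e^{t}+(1-q_+-q_-)$ is minimized at $t^\ast=\tfrac12\log(q_+/q_-)$, where it equals $1-\pth{\sqrt{q_+}-\sqrt{q_-}}^2 = 1-\rho^2$. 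Hence each inequality has probability at most $(1-\rho^2)^h\leq e^{-h\rho^2}$, and the union bound gives the stated factor of $2$.

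I expect the only genuine subtleties to be (i) justifying the i.i.d. reduction, in particular that $a_v^i$ is independent of the correlated pair $(a_u^i,b_u^i)$, and (ii) the bookkeeping in the eight-case analysis that collapses to the clean probabilities $q_\pm$; once these are in place, the optimization yielding $1-\rho^2$ is mechanical. I would also flag that the minimizer $t^\ast$ is nonnegative exactly when $q_+\geq q_-$, equivalently $\rho\geq0$, which holds in the positively correlated sparse regime of interest where $\poo=1-o(1)$, so the Chernoff bound is legitimately applicable.
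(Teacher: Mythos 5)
Your proposal is correct and follows essentially the same route as the paper: the paper also reduces to a single misalignment inequality plus a union bound, writes the signature-distance difference as a sum of $h$ i.i.d.\ $\{-1,0,+1\}$-valued coordinate increments with $\prob{+1}=\poo\plx+\pll\pox$ and $\prob{-1}=\plo\pox+\pol\plx$, and applies a Chernoff bound with the optimal parameter (the paper's $z^*=\sqrt{\ql/\qo}$ is exactly your $e^{-t^*}$), yielding the per-coordinate factor $1-\rho^2\leq e^{-\rho^2}$. Your closing remark about needing $q_+\geq q_-$ for the optimizer to be admissible is a valid subtlety that the paper's proof leaves implicit in its constraint $0<z\leq 1$.
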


The quantity \(\rho\) is a measure of the correlation between the pair of graphs. The likelihood of misalignment between a pair of vertices can be upper bounded in terms of \(h\), the size of the readily identified set, and \(\rho\), the strength of the correlation between the new graphs. Applying this result over the entire graph gives us the following result.

\begin{remark}
  \label{rem}
  Let \((B_a,B_b)\sim ER(n,h;\pvec)\).
  Then for each \(u,v \in [n]\), the subgraphs induced by \(\{u,v\}\) and \([h]\) have joint distribution \(ER(2,h;\pvec)\).
  By \hyperref[lemma:bip-error]{Lemma \ref*{lemma:bip-error}}, the probability that \hyperref[alg:bip]{Algorithm \ref*{alg:bip}} misaligns \(u\) with \(v\) or \(v\) with \(u\)  is at most \(2\exp\pth{-h\rho^2}\).
  Then, by the union bound over all \(\binom{n}{2}\) pairs of vertices, \hyperref[alg:bip]{Algorithm \ref*{alg:bip}} correctly recovers the alignment between \(B_a\) and \(B_b\) with probability at least \(1-n(n-1)\exp\pth{-h\rho^2}\) and the algorithm is correct with probability $1-o(1)$ when 
  \begin{align}
      h \geq \frac{2 \log n + \omega(1)}{\rho^2}.
  \end{align}
\end{remark}

In our analysis of \hyperref[alg]{Algorithm \ref*{alg}}, the situation is similar yet not quite as simple as the one described in \hyperref[rem]{Remark \ref*{rem}}.
After we find the lists of anchors in $G_a$ and $G_b$, we obtain a pair of induced bipartite subgraphs: \(G_a[V_a \setminus H_a,\wvec_a]\) and \(G_b[V_b \setminus H_b,\wvec_b]\).
When the anchor lists are the same, i.e. \(\wvec_a = \wvec_b\), \hyperref[alg:bip]{Algorithm \ref*{alg:bip}} can be applied, but bipartite graphs do not have the joint distribution $ER(n-h,h,\pvec)$, required for \hyperref[rem]{Remark \ref*{rem}}.
This is due to the fact that we used edge information to partition the original vertex set, so the edges are not independent of this partition.
However, this dependence is weak.
In \hyperref[subsec-general]{Section \ref*{subsec-general}} we will apply \hyperref[lemma:bip-error]{Lemma \ref*{lemma:bip-error}} after careful conditioning.
\bcomment{This will allow us to show that if the condition on $h$ in \hyperref[rem]{Remark \ref*{rem}} is satisfied, the alignment stage of \hyperref[alg]{Algorithm \ref*{alg}} does not make any mistakes.}


\subsection{General alignment algorithm}

\label{subsec-general}

In this section we first show that the anchor alignment stage is independent from the alignment of any pair of non-anchor vertices in the bipartite alignment step. We do this by considering the subgraph obtained by removing any pair of vertices and show that the anchor set is sufficiently stable due to the degree separation of at least 3 as guaranteed by \hyperref[theorem:HDMatching]{Theorem \ref*{theorem:HDMatching}}. This then allows us to combine results on both stages to get the condition for successful alignment of pairs of random graphs.

Recall that \(\wvec_a = f_{h}(G_a)\) and \(\wvec_b = f_{h}(G_b)\).
For \(U = \{u_1,u_2\} \subseteq V\), the induced bipartite subgraphs
\((G_a[U,\wvec_a],G_b[U,\wvec_b])\) determine whether Algorithm 1 misaligns \(u_1\) with \(u_2\) or \(u_2\) with \(u_1\).
However, these graphs do not have a correlated ER joint distribution, so we define a related pair of induced bipartite subgraphs.

\begin{definition}
  \label{def:U}
  Let \(G_a\) and \(G_b\) be graphs on vertex set \(V\).
  For set \(U = \{u_1,u_2\} \subseteq V\), and \(h\in\natS\), define
  \begin{align*}
    \wvec_a^{U} = f_{h}(G_a[V\setminus\{u_1,u_2\}]) \quad \textrm{ and } \quad B_a^{U} = G_a[U,\wvec_a^{U}],
  \end{align*}
  i.e. \((u,i)\in E(B_a^U) \iff \{u,w_i^U\} \in E(G_a)\) for any \(u\in U\) and \(i\in[h]\). Define \(\wvec_b^{U}\) and \(B_b^{U}\) analogously.
  Let \(\mcE^{\operatorname{H}}(U)\) be the event \(\wvec_a^{U} = \wvec_b^{U}\).
\end{definition}
We emphasize that in both \(B_a^{U}\) and \(B_b^{U}\) the left vertex set is \(\{u_1,u_2\}\) and the right vertex set is \([h]\), so the vertex sets are not random variables.

We start by stating a result on conditional independence of the high-degree neighborhoods of a pair of vertices.
\begin{lemma}
  \label{lemma:bip-dist}
  Let \((\Ga,\Gb)\sim ER(n;\pvec)\) be correlated graphs on the vertex set \(V\) and let \(U = \{u_1,u_2\} \subseteq V\).
  Then
  \begin{align*}
    B_a^{U} \sim ER(2,h,\plx), \quad B_b^{U} \sim ER(2,h,\pxl)
  \end{align*}
  \begin{align*}
      \textrm{and} \qquad (B_a^{U},B_b^{U})|\mcE^{\operatorname{H}}(U) \sim ER(2,h,\pvec),
  \end{align*}
  where \(B_a^U\) and \(B_b^U\) are as defined in \hyperref[def:U]{Definition \ref*{def:U}}
\end{lemma}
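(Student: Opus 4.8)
The plan is to exploit the fact that in the correlated ER model the pairs of edge indicators are independent and identically distributed across the $\binom{n}{2}$ vertex pairs, and to partition these pairs into those \emph{internal} to $V \diffS U$ and those \emph{crossing} from $U$ to $V \diffS U$. First I would observe that both anchor vectors $\wvec_a^{U}$ and $\wvec_b^{U}$, and hence the event $\mcE^{\operatorname{H}}(U)$, are functions of the induced subgraphs $G_a[V \diffS U]$ and $G_b[V \diffS U]$ alone, so they depend only on the edge-indicator pairs indexed by internal vertex pairs. By contrast, every edge of $B_a^{U}$ and $B_b^{U}$ is indexed by a crossing pair $\acc{u_j, w_i}$ with $u_j \in U$ and $w_i \in V \diffS U$ --- the anchors lie in $V \diffS U$ precisely because they are computed on the graph with $U$ deleted. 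Since the internal and crossing pairs are disjoint, the i.i.d.\ structure makes the family of crossing edge variables independent of the family of internal edge variables.

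Next I would condition on the internal edges, which fixes a realization of the anchor vectors. For any such realization the anchors $w_1, \dots, w_h$ are $h$ distinct vertices of $V \diffS U$, so the $2h$ crossing pairs $\acc{u_1, w_1}, \dots, \acc{u_1, w_h}, \acc{u_2, w_1}, \dots, \acc{u_2, w_h}$ are pairwise distinct: none can coincide because $u_1 \neq u_2$ and no $w_i$ lies in $U$. Their edge-indicator pairs are therefore i.i.d.\ with the correlated Bernoulli law of \hyperref[corr-bern]{(\ref*{corr-bern})}. For the single graph $G_a$, each crossing indicator is $\mathrm{Bernoulli}(\plx)$, so conditionally $B_a^{U}$ is a bipartite $ER(2,h,\plx)$ graph; because this conditional law does not depend on which anchors were selected, marginalizing over the internal edges gives $B_a^{U} \sim ER(2,h,\plx)$ unconditionally, and symmetrically $B_b^{U} \sim ER(2,h,\pxl)$.

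For the joint statement I would redo the conditioning but restrict to $\mcE^{\operatorname{H}}(U) = \acc{\wvec_a^{U} = \wvec_b^{U}}$, which is itself a function of the internal edges only. Conditioning on any internal realization in this event fixes a common anchor vector $\wvec_a^{U} = \wvec_b^{U} = \wvec$, and then each crossing pair $\acc{u_j, w_i}$ carries the full correlated pair $\pth{\ind{\acc{u_j, w_i} \in \Ea}, \ind{\acc{u_j, w_i} \in \Eb}}$, jointly distributed by the correlated Bernoulli law with parameter $\pvec$, i.i.d.\ across the $2h$ pairs and independent of the conditioning. Hence $(B_a^{U}, B_b^{U})$ given this realization is distributed as $ER(2,h,\pvec)$, and since this holds for every $\wvec$ consistent with $\mcE^{\operatorname{H}}(U)$, the conditional law given $\mcE^{\operatorname{H}}(U)$ is $ER(2,h,\pvec)$.

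The step I expect to require the most care is justifying that conditioning on $\mcE^{\operatorname{H}}(U)$ does not disturb the distribution of the crossing edges. The cleanest route is the measurability/disjointness argument above: the conditioning event and the anchor labels are measurable with respect to $\sigma\pth{G_a[V \diffS U], G_b[V \diffS U]}$, while $B_a^{U}$ and $B_b^{U}$ live on the disjoint crossing pairs, so independence is automatic and the conditional distribution of the crossing variables is unchanged. This is exactly why \hyperref[def:U]{Definition \ref*{def:U}} deletes $U$ before running $f_h$: had the anchors been selected using the full graph, their degrees would involve the very crossing edges we wish to keep independent, and the conditioning would bias them.
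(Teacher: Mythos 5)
Your proof is correct and takes essentially the same approach as the paper's: both hinge on the observation that the anchor vectors $\wvec_a^{U}, \wvec_b^{U}$ (and hence $\mcE^{\operatorname{H}}(U)$) are functions of $G_a[V\diffS U]$ and $G_b[V\diffS U]$ alone, while $B_a^{U}$ and $B_b^{U}$ involve only the disjoint set of crossing edge variables, so independence is automatic and conditioning on a common anchor vector makes corresponding positions of the two bipartite graphs refer to the same vertex pair with the correlated Bernoulli law. Your write-up merely makes explicit the conditioning-and-marginalization steps that the paper's terser argument leaves implicit.
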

\begin{proof}
  Recall that, by definition, \(B_a^{U} = G_a[U,\wvec_a^{U}]\) and \(\ind{(u,j) \in E(B_a^{U})} = \ind{\{u,w_{a,j}^U\} \in E(G_a)}\).
  We will show that despite being defined using \(\wvec_a^{U}\), the random variable \(B_a^{U}\) is independent of the random variable \(\wvec_a^{U}\).
  Observe that \(B_a^{U} = G_a[U,\wvec_a^{U}]\) is independent of \(G_a[V\setminus U]\) because they have no edge random variables in common.
  Because \(\wvec_a^{U}=f_{h}(G_a[V\setminus U])\), \(B_a^{U}\) is independent of \(\wvec_a^{U}\) as well.

  Similarly, \(B_b^{U}\) is independent of \(\wvec_b^{U}\) and \(\ind{(u,j) \in E(B_b^{U})} = \ind{\{u,w_{b,j}\} \in E(G_b)}\).
  As long as \(\wvec_a^{U} = \wvec_b^{U}\) holds, \(\ind{(u,j) \in E(B_a^{U})}\) and \(\ind{(u,j) \in E(B_b^{U})}\) have the joint distribution of a pair of corresponding edges in the correlated Erdős-Rényi model.
\end{proof}

This result may be counterintuitive because we are selecting the right vertex set of \(B^{U}_a\) using high degree vertices, but there the edge density of \(B^{U}_a\) is the same as \(G_a\).
For a fixed \((u,j)\in U \times [h]\), the random variable \(\ind{(u,j) \in E(B_a^{U})}\) is not determined by any single edge random variable from \(G_a\), but is a mixture of \(\ind{\{u,v\} \in E(G_a)}\) over all \(v \in V \setminus U\) because \(\wvec_b^{U}\) is random.
It is helpful to compare with \(G_a[U^{\wvec_a},\wvec_a]\), where \(U^{\wvec_a} = \{u_1,u_2\}\) is a uniformly random subset of \(V \setminus H_a\).
This bipartite graph is not distributed as \(ER(n,\plx)\) because edges of \(G_a\) are slightly more likely to be sampled than non-edges.

Recall from \hyperref[HD-events]{Definition \ref*{HD-events}} that \(\mcE_a^{\operatorname{S}}\) is defined as the event that \(\degseq_{a,i} > \degseq_{a,i+1} + 2\) for all \(i \in [h]\) and $\mcE_b^{\operatorname{S}}$ is the corresponding event for $\wvec_b$ and $G_b$. 
\begin{lemma}
  \label{lemma:deg-spacing}
  The event $\mcE_a^{\operatorname{S}}$ implies \(\wvec_a = \wvec_a^{U}\) for all \(U \subseteq V\) pair of vertices that do not include any from \(\wvec_a\).
  Similarly $\mcE_b^{\operatorname{S}}$ implies \(\wvec_b = \wvec_b^{U}\).
\end{lemma}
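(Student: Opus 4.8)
The plan is to turn this into a purely deterministic, one-shot argument by matching the ``$+2$'' slack in the separation event against the worst-case degree change caused by deleting two vertices. The starting observation is that removing the two vertices of $U = \{u_1,u_2\}$ only \emph{lowers} degrees, and by at most two: for every $v \in V\setminus U$ we have $d_{G_a[V\setminus U]}(v) = \da{v} - \norm{\Na{v}\cap U}$ with $\norm{\Na{v}\cap U}\in\{0,1,2\}$, hence $\da{v}-2 \leq d_{G_a[V\setminus U]}(v) \leq \da{v}$. Since by hypothesis $U$ is disjoint from $\wvec_a=(w_1,\dots,w_h)$, each anchor $w_i$ survives in $V\setminus U$ and still carries its original degree $\da{w_i}=\degseq_{a,i}$.

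First I would show the top-$h$ \emph{set} is unchanged. Let $v$ be any vertex of $(V\setminus U)\setminus\{w_1,\dots,w_h\}$; by the definition of the sorted degree sequence, $\da{v}\leq \degseq_{a,h+1}$. Comparing $v$ against the weakest anchor $w_h$ and using the degree-drop bound together with $\mcE_a^{\operatorname{S}}$, which at $i=h$ gives $\degseq_{a,h}\geq \degseq_{a,h+1}+3$, I would obtain
\[
  d_{G_a[V\setminus U]}(w_h) \geq \degseq_{a,h}-2 \geq \degseq_{a,h+1}+1 > \degseq_{a,h+1} \geq d_{G_a[V\setminus U]}(v).
\]
Thus $w_h$, and a fortiori every $w_i$, outranks all remaining candidates, so the $h$ highest-degree vertices of $G_a[V\setminus U]$ are exactly $\{w_1,\dots,w_h\}$.

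Next I would check that the ordering is preserved. For each $i\in[h-1]$ the same two bounds, now applied with the separation inequality $\degseq_{a,i}\geq \degseq_{a,i+1}+3$, give
\[
  d_{G_a[V\setminus U]}(w_i) \geq \degseq_{a,i}-2 \geq \degseq_{a,i+1}+1 > \degseq_{a,i+1} \geq d_{G_a[V\setminus U]}(w_{i+1}),
\]
so the degrees of $w_1,\dots,w_h$ in the reduced graph remain strictly decreasing in $i$. A strictly decreasing degree profile pins down the output of $f_h$ uniquely, so $f_h(G_a[V\setminus U])=(w_1,\dots,w_h)=\wvec_a$, i.e.\ $\wvec_a^{U}=\wvec_a$. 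The identical argument with $\Gb$ and $\mcE_b^{\operatorname{S}}$ yields $\wvec_b^{U}=\wvec_b$.

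There is no real obstacle here beyond bookkeeping the inequalities, since the statement is a deterministic implication requiring no probabilistic reasoning. The only point worth emphasizing is that the ``$+2$'' in Definition~\ref{HD-events} is exactly what absorbs the worst-case degree drop of two from deleting two vertices: a separation of merely one (distinct degrees) would not survive this perturbation, which is precisely why gaps of at least three are demanded.
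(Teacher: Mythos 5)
Your proof is correct and takes essentially the same approach as the paper: the paper's own proof is just the one-sentence observation that deleting the two vertices of \(U\) changes any vertex's degree by at most \(2\), which the gap of at least \(3\) guaranteed by \(\mcE_a^{\operatorname{S}}\) absorbs. You have simply written out the bookkeeping (preservation of the top-\(h\) set and of its ordering) that the paper leaves implicit.
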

\begin{proof}
  For any \(v \in V\), the degree of \(v\) in \(G_a\) differs by at most 2 from the degree of \(v\) in \(G_a[V\setminus U]\). The same holds for \(G_b\).
\end{proof}
Finally we prove our main theorem:

\bcomment{
\newtheorem*{Tmain}{Theorem \ref*{theorem:main}}
\begin{Tmain}
  \mainthm
\end{Tmain}
}

\begin{proof}[\underline{\textbf{Proof of Theorem \ref*{theorem:main}}}]
  \hyperref[theorem:HDMatching]{Theorem \ref*{theorem:HDMatching}} provides the condition on the correlation of graphs required to successfully align a given number \(h\) of high-degree vertices.
  From the inequalities \(h \leq \mathcal{O}\pth{\frac{\log n}{p_{11}}}\), \(\log h \leq \log n\), and the conditions in the theorem statement, \(\pll \geq \omega\pth{n^{-1/5}\log^{7/5} n}\) and \(\pol+\plo \leq o\pth{\frac{\pll ^5}{\log^6 n}}\), we have 
  \begin{align*}
    \max\acc{(\log h)^2, n(\plo+\pol)\log h} \leq o\pth{\frac{n\,\pll}{h^4\log n}\cdot\frac{\pll}{\plx}}.
  \end{align*}
  Thus \(P\croc{\mcE^{\operatorname{H}} \wedge \mcE_a^{\operatorname{S}} \wedge \mcE_b^{\operatorname{S}}} \geq 1-o(1)\), where \(\mcE^{\operatorname{H}}\), \(\mcE_a^{\operatorname{S}}\) and \(\mcE_b^{\operatorname{S}}\) are events as defined in \hyperref[HD-events]{Definition \ref*{HD-events}}.
  These events imply \(H_a = H_b = H\).

  Recall the definition of \(\mcE^{\operatorname{M}}(B_a,B_b)\) from \hyperref[lemma:bip-error]{Lemma \ref*{lemma:bip-error}}
   and \(\mcE^{\operatorname{H}}(U)\) from \hyperref[def:U]{Definition \ref*{def:U}}.
  Applying the union bound to error events in the bipartite alignment stage of the algorithm results in the following:
  \begin{align*}
    P[\widehat{M} &\neq M | \mcE^{\operatorname{H}} \wedge \mcE_a^{\operatorname{S}} \wedge \mcE_b^{\operatorname{S}}]\\
    &\leq \sum_{\{u_1,u_2\} \subseteq \rem} P\croc{\mcE^{\operatorname{M}}(G_a[U,\wvec_a],G_b[U,\wvec_b]) \wedge \mcE^{\operatorname{H}} \wedge \mcE_a^{\operatorname{S}} \wedge \mcE_b^{\operatorname{S}}}\\
    &\leql{a} \sum_{\{u_1,u_2\} \subseteq \rem} P\croc{\mcE^{\operatorname{M}}(B_a^{U},B_b^{U}) \wedge \mcE^{\operatorname{H}}(U)}\\
    &\leql{b} \sum_{\{u_1,u_2\} \subseteq V} P\condr{\mcE^{\operatorname{M}}(B_a^{U},B_b^{U})}{\mcE^{\operatorname{H}}(U)}\\
    &\leql{c} \sum_{\{u_1,u_2\} \subseteq V} \exp(-h \rho^2).
  \end{align*}
  The inequality $(a)$ is derived by applying \hyperref[lemma:deg-spacing]{Lemma \ref*{lemma:deg-spacing}} twice, which gives \(G_a[U,\wvec_a] = B_a^{U}\), \(G_a[U,\wvec_b] = B_b^{U}\), and \(\wvec_a^{U} = \wvec_b^{U}\).
  (Recall that the event \(\{\wvec_a^{U} = \wvec_b^{U}\}\) is denoted by \(\mcE^{\operatorname{H}}(U)\).)
  In $(b)$, we use \(P[\mcE^{U}] \leq 1\) and also extend the sum to include pairs \(\{u_1,u_2\}\) that include members of \(H\).
  Because \(u_1\) and \(u_2\) are now arbitrary vertices with no conditioning, from \hyperref[lemma:bip-dist]{Lemma \ref*{lemma:bip-dist}} we have that \((B_a^{U},B_b^{U}) \sim ER(2,h,\pvec)\).
  Observe that for any \(U = \{u_1,u_2\} \subseteq \rem\), the signatures in \hyperref[lemma:bip-error]{Lemma \ref*{lemma:bip-error}} are the same as the signatures in Algorithm 1: \(\Sx{G_a[U,\wvec_a]}{u_i} = \SBa{u_i}\).
  Finally, $(c)$ follows from \hyperref[lemma:bip-error]{Lemma \ref*{lemma:bip-error}}.
  Note that the final bound is the same as the one stated earlier in \hyperref[rem]{Remark \ref*{rem}}.
  
  We have
  \begin{align*}
    \rho
    &= \sqrt{\poo\plx + \pll \pox}-\sqrt{\plo\pox + \pol\plx}\\
    &= \sqrt{\pll\poo}\pth{\sqrt{2 + \frac{\plo}{\pll} + \frac{\pol}{\poo}}-\sqrt{2 \frac{\pol\plo}{\pll\poo} + \frac{\plo}{\pll} + \frac{\pol}{\poo}}}\\
    &\geq \sqrt{2\pll}\pth{1-\mathcal{O}\pth{\frac{1}{\log n}}}
  \end{align*}
  because \(\frac{\plo}{\pll} \leq o\pth{\frac{1}{\log^6 n}}\) and \(\frac{\pol}{\poo} \leq o\pth{\frac{1}{\log^6 n}}\).
  The logarithm of the probability of an incorrect alignment in $\rem$ is at most
  \begin{align*}
    \log&\pth{n(n-1)\exp(-h \rho^2)}\\
    &\leq 2 \log n - \frac{\log n + \omega(1)}{p_{11}}\cdot2 p_{11}\pth{1-\mathcal{O}\pth{\frac{1}{\log n}}}\\
    &\leq 2 \log n - 2\log n + \mathcal{O}(1) - \omega(1)\pth{1-\mathcal{O}\pth{\frac{1}{\log n}}}
    = -\omega(1).
  \end{align*}
\end{proof}



\section{Implementation and Performance Evaluation}
\label{simulations}

In this section, we study the performance of our canonical labeling algorithm through simulations over real and synthetic data. In \hyperref[subsec:implementation]{Section \ref*{subsec:implementation}}, we  describe our slight modification to the original algorithm to improve its performance in small graphs. In \hyperref[subsec:simER]{Section \ref*{subsec:simER}} and \ref{subsec:simProtein},  we compare the performance of our algorithm against  EigenAlign and LowRankAlign \cite{feizi2016spectral} on synthetically generated correlated ER graphs and on a protein network, respectively.

\subsection{Implementation}
\label{subsec:implementation}

We consider an implementation of a variant of \hyperref[alg]{Algorithm \ref*{alg}} for finite graphs. The modifications introduced over the original algorithm provide some robustness against certain events that have low likelihood in the asymptotic case but that become more significant when considering small graphs. Thus our theoretical analysis applies equally to the modified algorithm.

\textbf{Consistent bipartite alignment:} By \hyperref[lemma:bip-error]{Lemma \ref*{lemma:bip-error}}, with high probability within the regime of interest, there is a unique signature \(\Sb{v}\) in \(\Gb\) at minimum distance to any signature \(\Sa{u}\) in \(\Ga\). If there are two signatures from \(\Gb\) that both lie at minimum distance to a given signature in \(\Ga\), the naive approach considered in the analysis would simply align both vertices from \(\Vb\) to the same vertex in \(\Va\). We impose a requirement of `consistency' to the signature alignment operation that acts less naively in this event. Let \(D\in\natS^{U_a\times U_b}\) be the matrix whose entries \(D_{u,v}\) correspond to the Hamming distance between signatures \(\Sa{u}\) and \(\Sb{v}\) obtained by anchor list \(H_a\) and \(H_b\). Let \(\mu_{a\to b}:U_a\to U_b\) and \(\mu_{b\to a}:U_b \to U_a\) denote the position of the minimum value in any row or column respectively. Consistent signature alignment aligns \((u,v)\) is aligned if and only if all of the following hold: \(\forall u' \in U_a\setminus\{u\}\), \(\mu_a(u) \neq v\), \(\forall v' \in U_b\setminus\{v\}\), \(\mu_b(v) \neq u\) and either \(\mu_a(u) = v\) or \(\mu_b(v) = u\). Consistent signature alignment might leave some vertices unmatched, in which case we perform another alignment until all vertices have been matched.

\textbf{Robust anchor alignment:} By \hyperref[corollary:HDMatchingN]{Corollary \ref*{corollary:HDMatchingN}} and \hyperref[theorem:Bollobas]{Theorem \ref*{theorem:Bollobas}}, the degree sequence on the higher extreme is well separated in \(\Ga\) which guarantees it preserving the same order in \(\Gb\). The same argument can be shown to apply for the lower extreme of the degree sequence. Thus in the implementation we extract anchors from both extremes. Furthermore we consider a modification that filters out anchors that appear to be misaligned. This is done as follows: We pick a given number of vertices from both extremes of the degree sequence in both graphs and align them one-by-one according to their position in the degree sequence. Then, using this alignment as anchors, we perform consistent signature alignment over the same subset of vertices to get a new alignment. Then we construct the agreement graph \(G_{\textrm{agr}}\), i.e. a graph over the aligned pairs of vertices where any edge \(e\in G_{\textrm{agr}}\) if and only if \(e \in \Ga\) and \(e\in \Gb\) or \(e\notin \Ga\) and \(e \notin \Gb\)). We prune this graph down to a minimum size and consider the surviving pairs of aligned vertices to be our anchors. We iteratively repeat this process of degree alignment - signature alignment - pruning. At each iteration, degree alignment is only performed on the vertices that haven't been included in the final anchor set in the previous iteration. We stop iterating when the pruned agreement graph's density stops increasing between iterations.

Note that neither modification changes the performance of the algorithm as \(n\to\infty\) since the events where the original algorithm would give a different outcome than the variant has occur with probability \(o(1)\) in the regime of interest.

\bcomment{

\subsection{Implementation of anchor alignment}

We consider an implementation of a variant of \hyperref[alg]{Algorithm \ref*{alg}} for finite graphs. While \hyperref[theorem:main]{Theorem \ref*{theorem:main}} shows the original algorithm to be robust against some degree of noise in the case of asymptotically large graphs, the discreteness of degree values imposes some limitation on its performance on smaller graph pairs. Given \(\pvec\), while we require larger degree separation between high-degree vertices as the graph size increases, we always need to have a degree separation of at least 1 no matter how small the graph is in order to distinguish them during the anchor alignment phase. For small graphs, especially with small density, this is unlikely as all degrees are packed within a very small range of integer values. In this case there is simply no `canonical' labeling as the list of anchors is not unique and the algorithm fails independently of the level of noise.

\begin{algorithm}[H]\captionsetup{labelfont={sc,bf}}

 \caption{
  Advanced anchor alignment\\
  {\bf Input:} \(\Ga=(\Va;\Ea)\), \(\Gb=(\Vb;\Eb)\), \(\pvec\) \\
  {\bf Output:} Anchor lists \((H_a;H_b)\)
 }

\begin{algorithmic}[1]
\label{algSim1}
\begin{small}

\STATE{\(d_a, d_b\) degrees in \(G_a, G_b\).}
\STATE{\(h = h(n;\pvec)\)}
\STATE{Set of anchor candidates \(U_a = \acc{\ceil{\frac{3h}{2}}\textrm{ highest degree vertices in \(\Ga\)}}\cup\acc{\ceil{\frac{3h}{2}}\textrm{ lowest degree vertices in \(\Ga\)}}\)}
\STATE{Set of anchor candidates \(U_b = \acc{\ceil{\frac{3h}{4}}\textrm{ highest degree vertices in \(\Gb\)}}\cup\acc{\ceil{\frac{3h}{4}}\textrm{ lowest degree vertices in \(\Gb\)}}\)}
\FOR{\(trials = 1,2,\cdots,\ceil{n^{\alpha}}\)}
\STATE{Tentative anchor lists \(H_a = H_b = ()\)}
\WHILE{Confidence in \((H_a;H_b)\) larger than confidence \((H_a^{\textrm{old}};H_b^{\textrm{old}})\)}
\STATE{\((H_a^{\textrm{old}};H_b^{\textrm{old}}) \leftarrow (H_a;H_b)\)}
\STATE{Randomly perturb \(d_a,d_b\) for all vertices in \(U_a\setminus H_a,U_b\setminus H_b\) to get \(d_a',d_b'\).}
\STATE{Perform simple degree alignment on \(U_a\setminus H_a\) using \(\delta_a'\) to get \(H_a'\). \(H_a \leftarrow (H_a,H_a')\)}
\STATE{Perform simple degree alignment on \(U_b\setminus H_b\) using \(\delta_b'\) to get \(H_b'\). \(H_b \leftarrow (H_b,H_b')\)}
\STATE{Perform consistent signature alignment on \(U_a,U_b)\) using \((H_a;H_b)\). Update \((H_a;H_b)\) to be the new alignment.}
\STATE{Perform alignment pruning on \((H_a;H_b)\). Update \((H_a;H_b)\) to be the new alignment.}
\STATE{Check confidence on \((H_a;H_b)\)}
\ENDWHILE
\ENDFOR
\STATE{Pick the pair of anchor lists \((H_a;H_b)\) with the highest confidence among the \(\ceil{n^{\alpha}}\) trials.}

\end{small}
\end{algorithmic}
\end{algorithm}

Our variation of the algorithm uses the degree sequence to pick a set of candidates to form the anchor set. We pick these vertices from both extremes of the degree sequence unlike the original algorithm that only considers those at the higher end. (In the original algorithm this is justified by the fact that getting around the same number of anchors from the other extreme of the degree sequence would not contribute to the asymptotic performance.) Furthermore instead of blindly performing the alignment based only on the degrees, we consider the adjacencies among the candidate vertices, in order to get a more `consistent' pair of anchor lists. Finally we repeat this process a given number of times, introducing a random perturbation to the degree sequence each time, before finally picking the most `consistent' one of the resulting anchor sets. We give an outline of the algorithm in \hyperref[algSim1]{Algorithm \ref*{algSim1}}. A detailed explanation of each process in the algorithm is given as follows:

\begin{itemize}
    \item \(h(n;\pvec)\) is chosen to be \(2\log n/I_{ab}\) where \(I_{ab}\) denotes the mutual information of a pair of copies of an edge random variable in the two graphs.
    \item We pick a set of \(\frac{3h}{2}h\) vertices as our \textbf{candidate set} in order to allow for a buffer of \(h/2\) vertices in \(U_a\) that might not have their true match included in \(U_b\) and vice versa. We pick half of these vertices from the higher extreme of the degree sequence and the other half from the lower end.
    \item It is important to note that once we count degrees \(d_a, d_b\) and pick \(U_a\) and \(U_b\), we only need induced subgraphs \(G_a[U_a]\) and \(G_b[U_b]\) for the remainder of the anchor alignment phase, rather than the much larger \(G_a\) and \(G_b\).
    \item In the \textbf{random perturbation} step, we change degrees as \(d_a'(u) = d_a(u) - \eta_{a,u}\), \(d_b'(v) = d_b(v) - \eta_{b,v}\) where \(\eta_{a,u} \sim \textrm{Bin}\pth{d_a(u);\plox}\) and \(\eta_{b,v} \sim \textrm{Bin}\pth{d_b(v);\polx}\) are mutually independent random variables for all \(U_a, U_b\). This perturbation allows us to consider degree sequences where vertices with small degree gaps have positions swapped.
    \item \textbf{Simple degree alignment} consists of sorting vertices in \(U_a\setminus H_a\) and \(U_b\setminus H_b\) by their perturbed degrees \(d_a'\), \(d_b'\) and aligning them one by one. We then concatenate these alignments with the initial alignments, such that the alignment at the end of this step has size \(|H_a| = |H_b| = |U_a| = |U_b|\).
    \item \textbf{Consistent signature alignment} consists of constructing signatures using \((H_a;H_b)\) and matching any pair of vertices if and only if that is the unique consistent alignment. A more detailed explanation is given in \hyperref[subsec-ImpSigAli]{Subsection \ref*{subsec-ImpSigAli}}. This is likely to give an alignment smaller than \(|U_a|=|U_b|\).
    \item \textbf{Alignment pruning} is performed as follows: We construct the agreement graph \(G_{\textrm{agr}}\) whose vertex set corresponds of aligned vertex pairs in \((H_a;H_b)\). \((u_a;u_b)\) is adjacent to \((v_a;v_b)\) if and only if either \(\{u_a,v_a\}\in E_a\) and \(\{u_b,v_b\}\in E_b\) OR \(\{u_a,v_a\}\notin E_a\) and \(\{u_b,v_b\}\notin E_b\), i.e. only if the edge random variable agrees in both graphs. Then two correctly aligned vertex pairs are adjacent in \(G_{\textrm{agr}}\) with probability \(\pll + \poo\) while if any of them are misaligned the probability is only \(\plx\pxl + \pox\pxo\) as the edge random variables become independent. Thus misaligned vertices are likely to have lower degree. We recursively throw away the lowest degree vertex until all three of the following hold: the graph has at most \(h\) vertices, the edge density is at least \(0.9\times(p_11+p_00)\) and the lowest degree vertex has degree at least \(0.9\) times the mean degree. The remaining vertices in \(G_{\textrm{agr}}\) (which correspond to aligned vertex pairs) give us the tentative anchor set \((H_a;H_b)\).
    \item We evaluate the \textbf{confidence} of an alignment as follows: We perform a perturbation-free simple degree alignment of \((U_a\setminus H_a;U_b\setminus H_b)\). We concatenate the resulting alignment with \((H_a;H_b)\). Then we perform alignment pruning on the resulting alignment until we are down to \(h\) vertices. (I.e. we ignore the density conditions mentioned above.) The density of this graph is our confidence value.
    \item We repeat this process \(\ceil{n^{\alpha}}\) times, i.e. exponential in the size of \(U_a\). We consider \(\alpha\) in the interval \([0.5 , 1]\).
\end{itemize}

\subsection{Implementation of signature alignment}
\label{subsec-ImpSigAli}

The original algorithm described in \cite{isomorphism} is only concerned with isomorphic graphs. Thus, in the bipartite alignment phase, two vertices are matched only if their signatures are identical. This allows alignment to be performed much faster, in \(\calO(n^2)\), since alignment can simply be done by performing some numerical sorting of vertices. In the case of non-isomorphic graphs, signature alignment requires comparing distances of many pairs before picking the one with the smallest Hamming distance. In our theoretical analysis we consider a naive approach, where we find the closest match in \(V_b\) to every vertex in \(V_a\). However this has the obvious shortcoming that it allows a single vertex in \(V_b\) to be aligned with multiple vertices in \(V_a\). In our implementation we use a slightly different approach that guarantees consistency by matching.

Let \(D\in\natS^{U_a\times U_b}\) be the matrix whose entries \(D_{u,v}\) correspond to the Hamming distance between signatures \(\Sa(u)\) and \(\Sb(u)\) obtained by anchor list \(H_a\) and \(H_b\). Let \(\mu_{a\to b}:U_a\to U_b\) and \(\mu_{b\to a}:U_b \to U_a\) denote the position of the minimum value in any row or column respectively. Consistent signature alignment aligns \((u,v)\) is aligned if and only if all of the following hold:
\begin{itemize}
    \item \(\forall u' \in U_a\setminus\{u\}\), \(\mu_a(u) \neq v\),
    \item \(\forall v' \in U_b\setminus\{v\}\), \(\mu_b(v) \neq u\),
    \item Either \(\mu_a(u) = v\) or \(\mu_b(v) = u\).
\end{itemize}

This approach is likely to eliminate some vertices that have multiple alignment candidates and result in a partial alignment. Thus in order to complete this partial alignment, first we repeat the entire alignment phase a couple of times, hoping it to get more accurate and therefore less likely to result in multi-matches. Then, if the alignment is still not complete, we perform alignment over the unaligned set of vertices. An outline of this algorithm is given in \hyperref[algSim1]{Algorithm \ref*{algSim2}}.

\begin{algorithm}[H]\captionsetup{labelfont={sc,bf}}

 \caption{
  Advanced anchor alignment\\
  {\bf Input:} \(\Ga=(\Va;\Ea)\), \(\Gb=(\Vb;\Eb)\), \((H_a;H_b)\) \\
  {\bf Output:} Estimated alignment \((H_a^{\textrm{total}};H_b^{\textrm{total}})\)
 }

\begin{algorithmic}[1]
\label{algSim2}
\begin{small}

\STATE{\(k_{\textrm{trials}} = 0\)}
\WHILE{Confidence in \((H_a;H_b)\) larger than confidence \((H_a^{\textrm{old}};H_b^{\textrm{old}})\) and \(k_{\textrm{trials}}<\tau\)}
\STATE{\(k_{\textrm{trials}} = k_{\textrm{trials}} + 1\)}
\STATE{Perform consistent signature alignment on \(V_a,V_b\) using \((H_a;H_b)\) to get \((H_a^{\textrm{total}};H_b^{\textrm{total}})\).}
\STATE{Randomly pick \(\ceil{1.25h}\) aligned vertex pairs from \((H_a^{\textrm{total}};H_b^{\textrm{total}})\) and replace \((H_a;H_b)\) with this new list.}
\STATE{Check confidence on \((H_a;H_b)\)}
\ENDWHILE
\STATE{\(U_a = V_a \setminus H_a^{\textrm{total}}\)}
\STATE{\(U_b = V_b \setminus H_b^{\textrm{total}}\)}
\STATE{\(k_{\textrm{trials}} = 0\)}
\WHILE{\(|U_a|=|U_b|>0\) and \(k_{\textrm{trials}}<\tau\)}
\STATE{\(k_{\textrm{trials}} = k_{\textrm{trials}} + 1\)}
\STATE{Set \((H_a;H_b)\) to be a random list of \(\ceil{1.25h}\) aligned vertex pairs from \((H_a^{\textrm{total}};H_b^{\textrm{total}})\).}
\STATE{Perform consistent signature alignment on \(U_a,U_b\) using \((H_a;H_b)\) to get \((H_a^{\textrm{rem}};H_b^{\textrm{rem}})\).}
\STATE{\(U_a \leftarrow U_a \setminus H_a^{\textrm{rem}}\), \(\quad H_a^{\textrm{total}} \leftarrow (H_a^{\textrm{total}},H_a^{\textrm{rem}})\)}
\STATE{\(U_b \leftarrow U_b \setminus H_b^{\textrm{rem}}\), \(\quad H_b^{\textrm{total}} \leftarrow (H_b^{\textrm{total}},H_b^{\textrm{rem}})\)}
\ENDWHILE

\end{small}
\end{algorithmic}
\end{algorithm}

}

\subsection{Performance over Erdős-Rényi graphs}
\label{subsec:simER}

We ran simulations on correlated ER graphs of size ranging from \(n=128\) to \(n=16,384\) to see how well our theoretical results generalize to small graphs. As expected, the algorithm’s performance suffers  in small graphs as a result of the discreteness of degrees. However this effect becomes less significant as the graphs grow in size.

We ran the algorithm over 20 pairs of correlated random Erdős-Rényi graphs with \(\pll = 1/4\) for various values of \(\plo=\pol\). We report the noise level as \(-\frac{\log_2 \plo}{\log_2 n}\) which is the relevant measure as seen in \hyperref[regions]{Fig. \ref*{regions}}. In \hyperref[fig:mean]{Fig. \ref*{fig:mean}} we give the mean performance of these experiments while \hyperref[tab:median]{Table \ref*{tab:median}} shows the median over the experiments. The performance of the algorithm increases as we consider larger graphs. 
We also note that, as seen in \hyperref[tab:median]{Table \ref*{tab:median}}, our implementation tends to either properly align nearly all vertices or almost none of them.

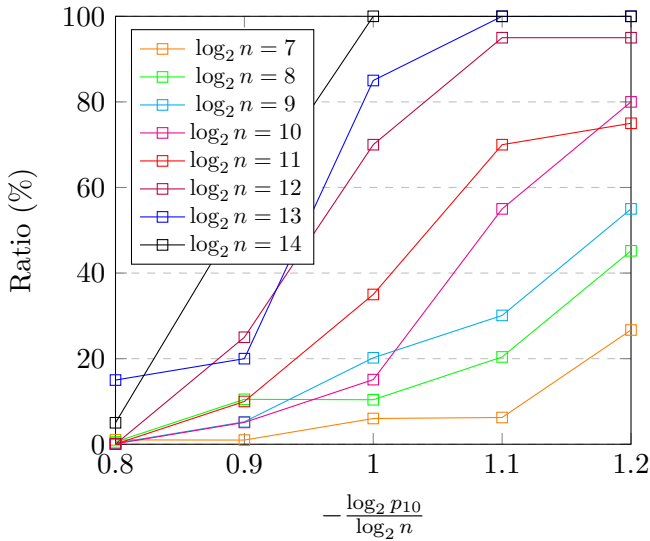
\begin{figure}[h]
\centering
\caption{Ratio of properly aligned vertices (mean over 20 random graph pairs)}
\label{fig:mean}
\begin{tikzpicture}
\begin{axis}[
    xlabel={\(-\frac{\log_2 \plo}{\log_2 n}\)},
    ylabel={Ratio (\%)},
    xmin=0.8, xmax=1.2,
    ymin=0, ymax=100,
    xtick={0.8, 0.9, 1, 1.1, 1.2},
    ytick={0,20,40,60,80,100,120},
    legend pos=north west,
    ymajorgrids=true,
    grid style=dashed,
    legend style={nodes={scale=0.8, transform shape}}
]
 
\addplot[
    color=orange,
    mark=square,
    ]
    coordinates {
    (0.8, 1.02)(0.9,0.977)(1,6.02)(1.1,6.25)(1.2,26.7)
    };
    \addlegendentry{\(\log_2 n = 7\)};

\addplot[
    color=green,
    mark=square,
    ]
    coordinates {
    (0.8,0.547)(0.9,10.5)(1,10.4)(1.1,20.4)(1.2,45.2)
    };
    \addlegendentry{\(\log_2 n = 8\)};
    
\addplot[
    color=cyan,
    mark=square,
    ]
    coordinates {
    (0.8,0.225)(0.9,5.25)(1,20.2)(1.1,30.1)(1.2,55.0)
    };
    \addlegendentry{\(\log_2 n = 9\)};

\addplot[
    color=magenta,
    mark=square,
    ]
    coordinates {
    (0.8,0.102)(0.9,5.10)(1,15.1)(1.1,55.0)(1.2,80.0)
    };
    \addlegendentry{\(\log_2 n = 10\)};
    
\addplot[
    color=red,
    mark=square,
    ]
    coordinates {
    (0.8,0.0439)(0.9,10.0)(1,35.0)(1.1,70.0)(1.2,75.0)
    };
    \addlegendentry{\(\log_2 n = 11\)};

\addplot[
    color=purple,
    mark=square,
    ]
    coordinates {
    (0.8,0.0256)(0.9,25.0)(1,70.0)(1.1,95.0)(1.2,95.0)
    };
    \addlegendentry{\(\log_2 n = 12\)};

\addplot[
    color=blue,
    mark=square,
    ]
    coordinates {
    (0.8,15.0)(0.9,20.0)(1,85.0)(1.1,100)(1.2,100)
    };
    \addlegendentry{\(\log_2 n = 13\)};

\addplot[
    color=black,
    mark=square,
    ]
    coordinates {
    (0.8,5.02)(0.9,55.2)(1,100)(1.1,100)(1.2,100)
    };
    \addlegendentry{\(\log_2 n = 14\)};
 
\end{axis}
\end{tikzpicture}
\end{figure}

\begin{table}[h]
    \centering
    \begin{tabular}{l|c c c c c|}
        \cline{2-6}
        & \multicolumn{5}{|c|}{\(-\log_2 \plo / \log_2 n\)}\\
         & 0.8 & 0.9 & 1.0 & 1.1 & 1.2 \\
        \hline
        \(\log_2 n = 7\) & 0.78 & 0.78 & 1.17 & 1.56 & 1.95 \\
        \hline
        \(\log_2 n = 8\) & 0.39 & 0.39 & 0.39 & 0.39 & 1.17 \\
        \hline
        \(\log_2 n = 9\) & 0.20 & 0.20 & 0.20 & \textcolor{red}{0.20} & \textcolor{green}{99.61} \\
        \hline
        \(\log_2 n = 10\) & 0.10 & 0.10 & \textcolor{red}{0.10}  & \textcolor{green}{100.00} & 99.90 \\
        \hline
        \(\log_2 n = 11\) & 0.05 & 0.05 & \textcolor{red}{0.10} & \textcolor{green}{100.00} & 100.00 \\
        \hline
        \(\log_2 n = 12\) & 0.02 & \textcolor{red}{0.02} & \textcolor{green}{100.00} & 100.00 & 100.00 \\
        \hline
        \(\log_2 n = 13\) & 0.01 & \textcolor{red}{0.02} & \textcolor{green}{100.00} & 100.00 & 100.00 \\
        \hline
        \(\log_2 n = 14\) & \textcolor{red}{0.01} & \textcolor{green}{99.98} & 100.00 & 100.00 & 100.00\\
        \hline
    \end{tabular}
    \caption{Ratio of properly aligned vertices (median over 20 random graph pairs)}
    \label{tab:median}
\end{table}

We also compared the algorithm’s performance with EigenAlign and LowRankAlign \cite{feizi2016spectral}. Since these algorithms do not scale well for large graphs, we only considered a small graph of size \(n=128\). This setting is unfavorable for our canonical labeling algorithm (since anchor alignment is difficult due to the discrete nature of degrees). Yet we still observe that the algorithm outperforms EigenAlign and is comparable to LowRankAlign for low noise.


Algorithms based on a semidefinite programming relaxation of quadratic alignment have also been proposed \cite{feizi2016spectral}, but are not computationally feasible even for $n=128$.

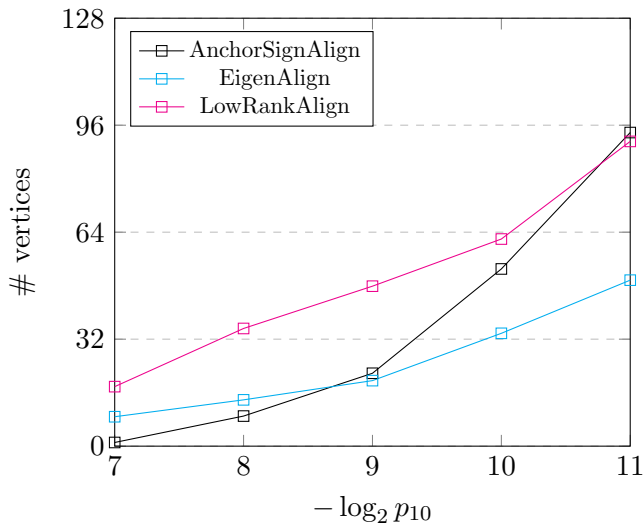
\begin{figure}[h]
\centering
\caption{Number of properly aligned vertices (mean over 20 random graph pairs)}
\label{fig:meanCompare}
\begin{tikzpicture}
\begin{axis}[
    xlabel={\(-\log_2 \plo\)},
    ylabel={\# vertices},
    xmin=7, xmax=11,
    ymin=0, ymax=128,
    xtick={7, 8, 9, 10, 11},
    ytick={0,32,64,96,128},
    legend pos=north west,
    ymajorgrids=true,
    grid style=dashed,
    legend style={nodes={scale=0.8, transform shape}}
]
 
\addplot[
    color=black,
    mark=square,
    ]
    coordinates {
    (7,1.1)(8,9.0)(9,21.8)(10,53.0)(11,93.8)
    };
    \addlegendentry{AnchorSignAlign};

\addplot[
    color=cyan,
    mark=square,
    ]
    coordinates {
    (7,8.8)(8,13.85)(9,19.6)(10,33.75)(11,49.65)
    };
    \addlegendentry{EigenAlign};
    
\addplot[
    color=magenta,
    mark=square,
    ]
    coordinates {
    (7,17.8)(8,35.2)(9,47.85)(10,61.95)(11,91.15)
    };
    \addlegendentry{LowRankAlign};
 
\end{axis}
\end{tikzpicture}
\end{figure}

\bcomment{
\begin{table}[h]
    \centering
    \begin{tabular}{l|c c c c c|}
        \cline{2-6}
        & \multicolumn{5}{|c|}{\(-\log_2 \plo\)}\\
         & 7 & 8 & 9 & 10 & 11 \\
        \hline
        AnchorSignAlign & 1 & 1 & 3 & 5 & 128 \\
        \hline
        EigenAlign & 9 & 14 & 20 & 33 & 49 \\
        \hline
        LowRankAlign & 20 & 40 & 51 & 66 & 98 \\
        \hline
    \end{tabular}
    \caption{Number properly aligned vertices (median over 20 random graph pairs)}
    \label{tab:medianCompare}
\end{table}
}

\subsection{Simulation over protein-protein interaction network}
\label{subsec:simProtein}

To  study the performance of the algorithm in actual networks, we ran simulations on a protein-protein interaction network. The distribution of such networks often is quite different from the ER model. Our results show that the algorithm is applicable as long as noise level is low enough.

The implementation has been run over the protein-protein interaction network of Campylobacter jejuni, which a species commonly considered when studying cross-species alignments of protein networks \cite{saraph2014}. Since AnchorSignAlign is only suitable for the alignment of graphs whose vertex sets have a one-to-one correspondence, we generate a pair of correlated graphs from this single network by subsampling at various rates \(s\). (The probability of any edge from the original graph being included in any of the new graphs is \(s\) independently from all other edges and the other new graph.)

\begin{table}[h]
    \centering
    \begin{tabular}{|l|c c c c c c c|}
        \hline
         \(-\log_2 (1-s)\) & 5 & 6 & 7 & 8 & 9 & 10 & \(\infty\) \\
        \hline
        \# correctly aligned & 10 & 751 & 755 & 763 & 765 & 765 & 765\\
        \hline
    \end{tabular}
    \caption{Number of vertices properly aligned by AnchorSignAlign over a pair of subsampled networks with different subsampling rates (n = 1095)}
\end{table}

The algorithm shows robustness against noise up to \(2^{-6}\) over this network. While the performance appears to plateau once the noise level goes below that value, this is in fact due to the automorphisms of the network, as many proteins are not distinguishable from others by simply considering the structure of the protein-protein interaction network. 

We have not been able to test EigenAlign and LowRankAlign on any protein-protein interaction network as these typically have more than 1000 nodes.

\subsection{Computational time}

Experimental results show the run-time of our implementation to scale as \(t \approx 0.5s \times (n^2\log_2 n)\).

\begin{table}[h]
    \centering
    \begin{tabular}{|l|c c c c|}
        \hline
         \(\log_2 n\) & 11 & 12 & 13 & 14 \\
        \hline
        \(t/(n^2\log_2 n)\) & 0.52 & 0.52 & 0.53 & 0.41\\
        \hline
    \end{tabular}
    \caption{Scaling factor for different values of \(n\)}
    \label{tab:medianCompare}
\end{table}

This is significantly better than the run-time of EigenAlign and LowRankAlign. This shows this approach to be suitable to perform alignment over very large graphs.

\begin{figure}[h]
\centering
\caption{Average time to compute an alignment}
\label{fig:timeCompare}
\begin{tikzpicture}
\begin{axis}[
    ymode = log,
    xlabel={\(\log_2 n\)},
    ylabel={time (s)},
    xmin=4.5, xmax=14,
    ymin=0.1, ymax=1000,
    xtick={5, 6, 7, 8, 9, 10, 11, 12, 13, 14},
    ytick={0.1, 1,10,100,1000},
    legend pos=south east,
    ymajorgrids=true,
    grid style=dashed,
    legend style={nodes={scale=0.8, transform shape}}
    ]
]
 
\addplot[
    color=black,
    mark=square,
    ]
    coordinates {
    (7,0.1)(8,0.2)(9,0.3)(10,0.8)(11,2.4)(12,10.7)(13,46.3)(14,156)
    };
    \addlegendentry{AnchorSignAlign};

\addplot[
    color=cyan,
    mark=square,
    ]
    coordinates {
    (5,1.1)(6,6.7)(7,93)
    };
    \addlegendentry{EigenAlign};
    
\addplot[
    color=magenta,
    mark=square,
    ]
    coordinates {
    (5,0.38)(6,4.1)(7,53)(8,890)
    };
    \addlegendentry{LowRankAlign};
 
\end{axis}
\end{tikzpicture}
\end{figure}
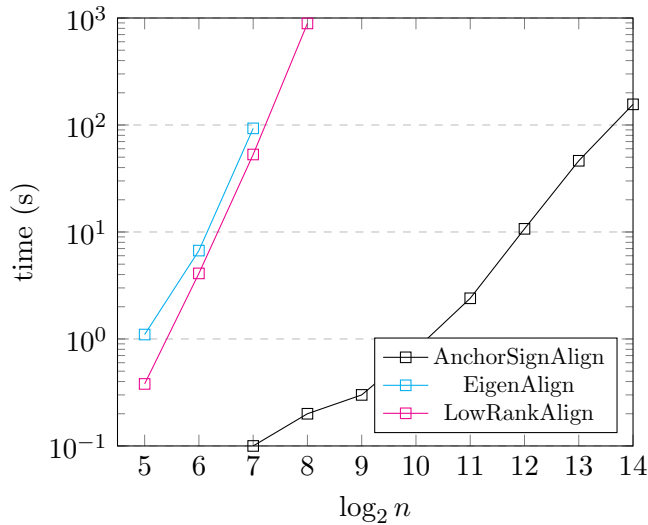


\section{Conclusion}
\label{conclusion}

We studied the performance of a canonical graph matching algorithm under the correlated ER graph model and obtained the expression for the region where the algorithm succeeds. To do so we analyzed the two steps of the algorithm, comprised of a high-degree matching and a subsequent bipartite matching. The first step which identified the pairing between high-degree subset of vertices can provide an initial set of seed vertices which may be used for various seed-based matching approaches. In this work we used a particular bipartite matching algorithm based on signatures derived from connections of the remaining (i.e. unidentified) vertices to the high-degree vertices identified at the first step.

There are a number of possible directions in which this work can be extended. One would be removing the assumption of the current model that the two vertex sets have a one-to-one correspondence. This would allow the analysis of more realistic scenarios where both graphs can potentially contain many vertices that have no exact match in the other. In this case it is necessary to avoid matching such vertices by considering some measure of the strength of correspondence between matching candidates. Another direction to consider is the scenario where information offered by the graph structure is richer. This would be the case when the edges are directed or weighted. It could also be the case that adjacency relations are defined by more than 2 states, rather than our model where the existence and absence of edges are the only 2 states. Another model with richer information would be the case of hypergraphs. Finally we note that some extensions to the algorithm, (such as considering highest-degree vertices at distance two rather than only immediate neighbor in the bipartite matching step) could provide considerable improvements to the region where the algorithm succeeds.

\bcomment{
\blue{***To be rewritten:***}
One direction in which this work could be extended is the analysis of other algorithms that could handle more complex scenarios. The considered algorithm has important limitations. It only applies to pairs of graphs whose vertex sets have an exact one-to-one correspondance, therefore it does not have use in the realistic case where the two networks significantly differ in their set of nodes. Another scenario would be the case where connections in the network contain more information, such as directions and/or weights which would be represented as directed and/or weighted graphs respectively. A further elaboration would consider connections involving multiple nodes, corresponding to hyperedges in graphs.
}

\bcomment{
Finally we must recognize the limitations of our model as ER graphs do not present a perfect match for many real life networks. Therefore evaluating the algorithm under different models could be of interest as it might provide more insight related to real-life applications.
}


\section*{Acknowledgments}
This work was supported in part by NSF grant CCF 16-17286 and MURI grant W911NF-15-1-0479.

\appendix


\bcomment{
\section{Construction of \hyperref[regions]{Fig. \ref*{regions}}}
\label{appendix:regions}
}

\section{Proofs of lemmas on anchor alignment}

\begin{proof}[\underline{\textbf{Proof of Lemma \ref{lemma:HDMatching2}}}]
  Let us denote the degree separations in the two graphs by \(\alpha \define \da{u}-\da{v}\) and \(\beta \define \db{u}-\db{v}\).
  Observe that the presence of the edge \(\{u,v\}\) in \(G_a\) does not affect \(\alpha\).
  Thus we define
  \begin{align*}
      \daprime{u} \define |N_a(u) \setminus \{v\}| \qquad &\dxaprime{u} \define n-2-\daprime{u}\\
      \daprime{v} \define |N_a(v) \setminus \{u\}| \qquad &\dxaprime{v} \define n-2-\daprime{v}.
  \end{align*}
  The error event in the degree sequence, i.e. \(\db{u}-\db{v}\leq k\), corresponds to \(\beta\leq k\). By the Chernoff bound:
  \begin{align*}
    \prob{\beta\leq k|\daprime{u},\daprime{v}} \leq z^{-k}\E{z^{\beta}|\daprime{u},\daprime{v}} \phantom{5} \forall 0 < z \leq 1.
  \end{align*}
  In \hyperref[appendix:beta]{Appendix \ref*{appendix:beta}} we derive an expression for the probability generating function \(F_\beta(z) \define \E{z^{\beta}|\daprime{u},\daprime{v}}\):
  \begin{align*}
    F_\beta(z) = &z^{\alpha}\pth{1+\plox(z-1)}^{\daprime{v}}\pth{1+\polx(z-1)}^{\dxaprime{u}}\\
    &\times\pth{1+\plox\pth{z^{-1}-1}}^{\daprime{u}}\pth{1+\polx\pth{z^{-1}-1}}^{\dxaprime{v}}.
  \end{align*}
  
  By applying \(1+x\leq e^x\) we get
  \begin{align}
      F_\beta(z) \leq &\exp\acc{\alpha\log z + \pth{\plox \daprime{v} + \polx \dxaprime{u}}(z-1)}\nonumber\\
      &\times\exp\acc{\pth{\plox \daprime{u} + \polx \dxaprime{v}}\pth{z^{-1}-1}}\nonumber
      \intertext{Furthermore applying \(\log x \leq x-1\) we have}
      z^{-k}F_\beta(z) \leq &\exp\acc{\pth{\alpha -k +\plox \daprime{v} + \polx \dxaprime{u}}(z-1)}\nonumber\\
      &\times\exp\acc{\pth{\plox \daprime{u} + \polx \dxaprime{v}}\pth{z^{-1}-1}} \label{ineq:lemma:HDMatching2}
  \end{align}
  Denote the coefficients by
  \begin{align*}
      r' \define \alpha - k+\plox \daprime{v} + \polx \dxaprime{u} \quad \textrm{ and } \quad r \define \plox \daprime{u} + \polx \dxaprime{v}.
  \end{align*} Denote their difference as
  \begin{align*}
      \Delta r \define r' - r &= \alpha - k + \plox \daprime{v} - \plox\Bpth{(n-2)-\daprime{v}}\\
      &\qquad - \plox\daprime{u} + \polx\Bpth{(n-2)-\daprime{u}}\\
      &= \alpha - k - \pth{\polx + \plox}(\daprime{u}-\daprime{v})\\
      &= \alpha \pth{1-\polx-\plox} - k
  \end{align*}\
  The right hand side of the inequality in \hyperref[ineq:lemma:HDMatching2]{(\ref*{ineq:lemma:HDMatching2})} is minimized at \(z^* \define \sqrt{r/r'}\). Taking the logarithm of both sides in \hyperref[ineq:lemma:HDMatching2]{(\ref*{ineq:lemma:HDMatching2})} and evaluating it at \(z=z^*\) we get
  \begin{align*}
      \log F_\beta(z^*)-k\log z^* \leq -\pth{\sqrt{r'}-\sqrt{r}}^2 = -\Delta r \pth{\sqrt{1+r/\Delta r}-\sqrt{r/\Delta r}}^2.
  \end{align*}
  The inequality \(\sqrt{1+x^2}-x\geq \frac{1}{1+2x}\) holds for any \(x\geq 0\). Specifically the choice of \(x = \sqrt{r/\Delta r}\) results in
  \begin{align*}
      -\croc{\log F_\beta(z^*)-k\log z^*} \geq \frac{\Delta r}{(1+2\sqrt{r/\Delta r})}
  \end{align*}
  Note that:
  \begin{align*}
    \Delta r \geq 4\max\acc{\eh,\sqrt{r\eh}} 
    \implies \pth{1+2\sqrt{\frac{r}{\Delta r}}}^2 &\leq 4\min\acc{1,\frac{r}{\eh},\sqrt{\frac{r}{\eh}}}\\
    \implies \frac{\Delta r}{\pth{1+2\sqrt{\frac{r}{\Delta r}}}^2} &\geq \frac{1}{4}\max\acc{\Delta r',\Delta r\frac{\eh}{r},\Delta r\sqrt{\frac{\eh}{r}}}\\
    &\geq\frac{\Delta r}{4}\geq\eh
  \end{align*}
  which implies \(\pth{z^*}^{-k}F_\beta(z^*) \leq e^{-\eh}\). Finally observe that \(\varphi \define \da{u}\plox + \dxa{v}\polx\) is at least \(r\). Therefore the condition in the statement of the lemma implies \(\Delta r \geq 4\max\acc{\eh,\sqrt{r\eh}}\).
\end{proof}

\begin{proof}[\underline{\textbf{Proof of Lemma \ref{corollary:HDMatchingN}}}]
  Let \(H_a\) and \(S_a\) be the set of \(h\) and \(s\) highest-degree vertices in \(\Ga\) respectively and define \(H_b\) analogously for \(G_b\).
  The following two events collectively imply \(f_{h}(\Ga)=f_{h}(\Gb)\) and \(\degseq_{b,i} - \degseq_{b,i+1} > k\) for any \(i\in[h]\).
  \begin{itemize}
  \item Let \(\mcE^{\text{high}}\) be the event that vertices in \(H_a\)  have the same degree ordering in \(\Ga\) and in \(\Gb\) as well as a minimum degree separation larger than \(k\) in \(G_b\). Note that this does not guarantee \(H_a = H_b\).
  \item Let \(\mcE^{\text{low}}\) be the event that all vertices in \(V\diffS H_a\) have degree less than \(\degseq_{b,h}-k\) in \(\Gb\), i.e. no vertex from \(V\diffS H_a\) is in \(H_b\) and all have a sufficiently large degree separation with the \(h\)-th highest-degree vertex.
  \end{itemize}


  First we consider \(\mcE^{\text{high}}\), i.e. the event where \(\degseq_{b,i}-\degseq_{b,j}>k\) for any \(i < j\) with \(i,j\in[h]\).
  Notice that it is sufficient to check this condition for consecutive pairs of vertices in the degree sequence.
  Given the condition in \hyperref[corollary:HDMatchingN:cond1]{(\ref*{corollary:HDMatchingN:cond1})}, \hyperref[lemma:HDMatching2]{Lemma \ref*{lemma:HDMatching2}} states that for any pair of vertices \(v_i,v_{i+1}\in H_a\), \(v_i\) and \(v_{i+1}\) in \(\Gb\) have the same degree ordering as well as a degree separation larger than \(k\) with probability at least \(e^{-\eh}\).
  Thus, by the union bound, we get \(\prob{\overline{\mcE^{\text{high}}}} \leq 1-h e^{-\eh}\).
  
  Second we consider \(\mcE^{\text{low}}\), i.e. the event where \(\degseq_{b,h}-\degseq_{b,i}>k\) for any \(i\in [n]\diffS[h]\).
  By the condition in \hyperref[corollary:HDMatchingN:cond1]{(\ref*{corollary:HDMatchingN:cond1})} we have, \(\forall i\in [s] \setminus [h]\),
  \begin{align*}
    \degseq_{a,h} - \degseq_{a,{i}} &\geq (i-h)(k+4\max\acc{\eh,\sqrt{\varphi\cdot\eh}})(1-\varepsilon)^{-1}\\
    &\geq \pth{k+4\max\acc{(i-h)\eh,\sqrt{(i-h)\varphi\eh}}}(1-\varepsilon)^{-1}
    \intertext{and \(\forall i\in [n] \setminus [s]\),}
    \degseq_{a,h} - \degseq_{a,{i}} &\geq (s+1-h)(k+4t\max\acc{\eh,\sqrt{\varphi\cdot\eh}})(1-\varepsilon)^{-1}\\
    &\geq \pth{k+4\max\acc{(s+1-h)\eh,\sqrt{(s+1-h)\varphi\eh}}}(1-\varepsilon)^{-1}.
  \end{align*}
  By \hyperref[lemma:HDMatching2]{Lemma \ref*{lemma:HDMatching2}} we then have
  \begin{align*}
      \prob{\degseq_{a,h} - \degseq_{a,{i}} \leq k} \leq \exp(-\eh \min\{i-h,s+1-h\}).
  \end{align*}
  Then, by the union bound, 
  \begin{align*}
    \prob{\overline{\mcE^{\text{low}}}}
    &\leq \sum_{i=h+1}^{s}e^{-\eh(i-h)} + \sum_{i=s+1}^n e^{-\eh(s+1-h)}\\
    &\leq \frac{e^{-\eh}}{1-e^{-\eh}} + (n-s)\frac{h}{n}e^{-\eh}
  \end{align*}
  Applying the union bound again we obtain
  \[
    \prob{\overline{\mcE^{\text{high}}} \vee \overline{\mcE^{\text{low}}}} \leq 
    (2h+1)e^{-\eh}/(1-e^{-\eh}).
    \]
\end{proof}

\begin{proof}[\underline{\textbf{Proof of Lemma \ref{lemma:maximumDegree}}}]
  For any vertex \(u\in V(G)\), \(d_G(u)\sim\textrm{Bin}(n-1;p)\).
  By the Chernoff bound, for any \(D\in \natS\) and \(z\in[1,\infty]\)
  \begin{align*}
    \prob{d_G(u)\geq D} \leq z^{-D}\E{z^{-d_G(u)}} \leq z^{-D}\croc{1+p(z-1)}^{n-1}.
  \end{align*}
  Applying \(1+x\leq e^x\) to both terms this becomes
  \begin{align*}
      \log\prob{d_G(u)\geq D} \leq D(z^{-1}-1)+ p(n-1)(z-1).
  \end{align*}
  The right hand side is minimized for \(z^* = \sqrt{\frac{D}{p(n-1)}}\) which gives us 
  \begin{align*}
      \log \prob{d_G(u)\geq D}\leq -\pth{\sqrt{D}-\sqrt{p(n-1)}}^2.
  \end{align*}
  
  Let \(D = (1+\epsilon)p(n-1)\).
  By the union bound, the probability that the maximum degree is at least \(D\) is at most
  \begin{align*}
      n\prob{d_G(u)\geq D} &\leq n\exp\pth{-p(n-1)\pth{\sqrt{1+\epsilon}-1}^2}\\
      &\leq n\exp\pth{-\omega(\log n)} \leq o(1).
  \end{align*}
\end{proof}

\section{Proofs of Lemmas on Bipartite Alignment}

\begin{proof}[\underline{\textbf{Proof of Lemma \ref{lemma:bip-error}}}]
  Define the random variable 
  \begin{align*}
      \gamma = \norm{\SBa{v} - \SBb{u}} - \norm{\SBa{u}-\SBb{u}}.
  \end{align*}
  
  We bound the probability of \(\gamma \leq 0\) using the Chernoff bound: \(\prob{\gamma\leq 0 } \leq \E{z^{\gamma}}\) for all \(0 < z \leq 1\). The generating function \(F_{\gamma}(z) \define \E{z^{\gamma}}\) is given as
  \begin{align*}
      F_{\gamma}(z) = \croc{1+\qo(z-1)+\ql(z^{-1}-1)}^h
  \end{align*}
  where \(\qo = \poo\plx + \pll \pox\) and \(\ql = \plo\pox + \pol\plx\). (See \hyperref[appendix:gamma]{Appendix \ref*{appendix:gamma}} for derivation.)
  
  Applying \(1+x\leq e^x\) and evaluating the function at \(z^* = \sqrt{\frac{\ql}{\qo}}\), we get \(\log F_{\gamma}(z^*) \leq -h\pth{\sqrt{\qo}-\sqrt{\ql}}^2\).
  Hence for \(\rho = \sqrt{\qo}-\sqrt{\ql}\) we have \(\prob{\gamma\leq 0} \leq \exp\pth{-h\rho^2}\).
  
  Notice that for the analogous 
  \begin{align*}
      \gamma' = \norm{\SBa{u} - \SBb{v}} - \norm{\SBa{v}-\SBb{v}}
  \end{align*}
  the same bound holds. The event \(\mcE^{\operatorname{M}}(B_a,B_b)\) is equivalent to\\ \(\{\gamma\leq 0 \vee \gamma'\leq 0\}\). Thus by the union bound \(\prob{\mcE^{\operatorname{M}}(B_a,B_b)} \leq 2\exp\pth{-h\rho^2}\).
\end{proof}

\section{Derivations of probability generating functions}

\textbf{Probability generating function of the degree separation beta}
\label{appendix:beta}

Given the random graphs \((\Ga,\Gb) \sim ER(n;\pvec)\) and for a given pair of vertices \(u, v\), define \(\daprime{u} = |N_a(u) \setminus \{v\}|\), \(\daprime{v} = |N_a(v) \setminus \{u\}|\) and \(\dbprime{u}\),\(\dbprime{v}\) analogous for \(\Gb\). We seek to find \(F_{\beta}(z) = \E{z^{\beta}|\daprime{u},\daprime{v}}\) where \(\beta = \db{u} - \db{v}\).

Let us denote the degree separation in \(\Ga\) as \(\alpha = \da{u} - \da{v}\). Note that \(\daprime{u}-\daprime{v} = \da{u} - \da{v} = \alpha\) and \(\dbprime{u}-\dbprime{v} = \db{u} - \db{v} = \beta\). Let us denote the number edges of \(x\) in \(\Ga\setminus\{u\}\) that are non-edges in \(\Gb\) (i.e. number of edges \textit{exclusive} to \(\Ga\)) as \(e_a^u = \norm{\Na{u} \setminus \Nb{u} \setminus\{v\}}\) and vice versa as \(e_b^u = \norm{\Nb{u} \setminus \Na{u} \setminus\{v\}}\). It can be shown that \(\dbprime{u} = \daprime{u} - e_a^u + e_b^u\). Similarly define \(e_a^v\) and \(e_b^v\) by ignoring the edge \(\{u,v\}\). We then have
\begin{align*}
    \dbprime{u}-\dbprime{v} &= \daprime{u}-\daprime{v} - e_a^u + e_a^v + e_b^u - e_b^v
\end{align*}
or simply \(\beta = \alpha - e_a^u + e_a^v + e_b^u - e_b^v\).
Notice that given \(\daprime{u}\) and \(\daprime{v}\), \(\alpha\) is deterministic. Also notice that the remaining terms \(e_a^u,e_a^v,e_b^u,e_b^v\) are mutually independent binomially distributed random variables with distribution:
\begin{align*}
    &e_a^u \sim B\pth{\daprime{u};\plox}, \phantom{5} e_b^u \sim B\pth{\dxaprime{u};\polx},\\
    &e_a^v \sim B\pth{\daprime{v};\plox}, \phantom{5} e_b^v \sim B\pth{\dxaprime{v};\polx}
\end{align*}
where \(\dxaprime{u} = n-2-\daprime{u}\) and \(\dxaprime{v} = n-2-\daprime{v}\).
The probability generating function of a binomially distributed random variable \(X\sim\textrm{Bin}(n,p)\) is given by \([1+p(z-1)]^n\). Thus we get the probability generating function of \(\beta\) as
\begin{align*}
    F_{\beta}(z) =& z^{\alpha}\pth{1+\plox\pth{z^{-1}-1}}^{\daprime{u}}\pth{1+\plox(z-1)}^{\daprime{v}}\\
    &\times\pth{1+\polx(z-1)}^{\dxaprime{u}} \pth{1+\polx\pth{z^{-1}-1}}^{\dxaprime{v}}
\end{align*}

\textbf{Probability generating function of the relative signature distance gamma}
\label{appendix:gamma}

Consider the random bipartite graphs \(B_a = (V,H;\Ea)\), \(B_b = (V,H;\Eb)\) distributed according to \((B_a,B_b) \sim ER(h,n;\pvec)\). For a given pair of vertices \(u,v \in V\) let us define the relative signature distance of \(u\) to \(v\) observed from \(\Gb\) as \(\gamma(u,v) = \norm{\SBa{v}-\SBb{u}} - \norm{\SBa{u}-\SBb{u}}\). \(\gamma(u,v)\) can be expressed as the sum of the contributions of each high-degree vertex \(w\in H\). The neighborhoods \(\Na{v},\Na{u}\) and \(\Nb{u}\) partition the set of high-degree vertices in 8 disjoint sets as given in \hyperref[fig:venn]{Fig. \ref*{fig:venn}}. We then have \(\gamma(u,v) = \sum_{w\in H} \ind{w\in H_3\cup H_4} - \ind{w\in H_1 \cup H_6}\).

\begin{figure}
\centering
\includegraphics[width=0.6\linewidth]{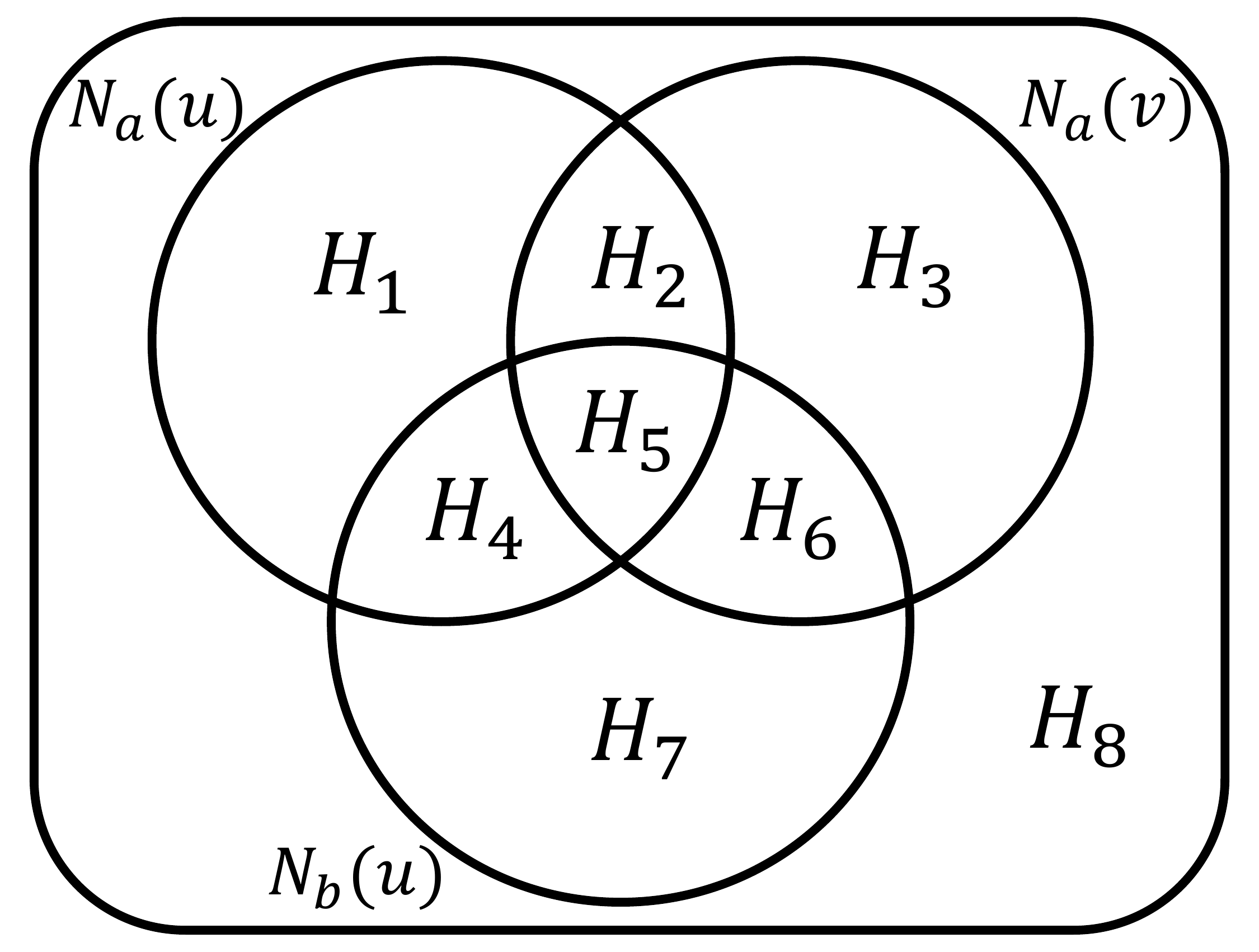}
\caption{Venn diagram representation of \(\Na{u}\), \(\Na{v}\) and \(\Nb{u}\)}
\label{fig:venn}
\end{figure}

Notice that for any \(w\in H\), \(\prob{w\in H_3\cup H_4} = \poo\plx + \pll \pox\) and \(\prob{w\in H_1 \cup H_6} = \plo\pox + \pol\plx\). In fact the random variables \(\Bacc{\ind{w\in H_3\cup H_4} - \ind{w\in H_1 \cup H_6}}_{w\in H}\) are mutually independent and identically distributed. 
Let us define \(\qo = \poo\plx + \pll \pox\) and \(\ql = \plo\pox + \pol\plx\)
This gives us the following generating function
\begin{align}
    \label{GF}
    F_{\gamma}(z) = \E{z^{\gamma(u,v)}} = \croc{1+\qo\pth{z-1}+\ql\pth{z^{-1}-1}}^h.
\end{align}

\bcomment{

\section{Proof of Corollary 4.3}
\label{appendix:corollaryProof}

\newtheorem*{myCorollary}{Corollary \ref*{corollary:HDMatchingN}}
\begin{myCorollary}
  Let \((\Ga,\Gb)\sim ER(n;\pvec)\) where \(\Ga = (V;\Ea)\) and \(\Gb = (V;\Eb)\).
  Define \(\varphi \define \Delta(\Ga)\plox + n\polx\) and \(\varepsilon \define \polx + \plox\).
  Let \(h\in[n]\) and
  \(\eh\) be functions of \(n\).
  Let \(s\) be an integer such that \(s \geq h + \frac{1}{\eh}\log \pth{\frac{n}{h}}+1\).
  If
  \begin{align}
    \forall i\in [s], \quad \degseq_{a,i} - \degseq_{a,i+1} &\geq (1-\varepsilon)^{-1}\Bpth{k + 4\max\acc{\eh,\sqrt{\varphi\cdot\eh}}}\label{corollary:HDMatchingN:cond1}
  \end{align}
  then, with probability at least \(1-(2h+1)e^{-\eh}/(1-e^{-\eh})\), \(f_{h}(\Ga) = f_{h}(\Gb)\) and \(\degseq_{b,i} - \degseq_{b,i+1} > k\) for any \(i\in[h]\).
\end{myCorollary}

\begin{proof}
  Let \(H_a\) and \(S_a\) be the set of \(h\) and \(s\) highest-degree vertices in \(\Ga\) respectively and define \(H_b\) analogously for \(G_b\).
  The following two events collectively imply \(f_{h}(\Ga)=f_{h}(\Gb)\) and \(\degseq_{b,i} - \degseq_{b,i+1} > k\) for any \(i\in[h]\).
  \begin{itemize}
  \item Let \(\mcE^{\text{high}}\) be the event that vertices in \(H_a\)  have the same degree ordering in \(\Ga\) and in \(\Gb\) as well as a minimum degree separation larger than \(k\) in \(G_b\). Note that this does not guarantee \(H_a = H_b\).
  \item Let \(\mcE^{\text{low}}\) be the event that all vertices in \(V\diffS H_a\) have degree less than \(\degseq_{b,h}-k\) in \(\Gb\), i.e. no vertex from \(V\diffS H_a\) is in \(H_b\) and all have a sufficiently large degree separation with the \(h\)-th highest-degree vertex.
  \end{itemize}


  First we consider \(\mcE^{\text{high}}\), i.e. the event where \(\degseq_{b,i}-\degseq_{b,j}>k\) for any \(i < j\) with \(i,j\in[h]\).
  Notice that it is sufficient to check this condition for consecutive pairs of vertices in the degree sequence.
  Because we have the condition \hyperref[corollary:HDMatchingN:cond1]{(\ref*{corollary:HDMatchingN:cond1})}, \hyperref[lemma:HDMatching2]{Lemma \ref*{lemma:HDMatching2}} states that for any pair of vertices \(v_i,v_{i+1}\in H_a\), \(v_i\) and \(v_{i+1}\) in \(\Gb\) have the same degree ordering as well as a degree separation larger than \(k\) with probability at least \(e^{-\eh}\).
  Thus, by the union bound, we get \(\prob{\overline{\mcE^{\text{high}}}} \leq 1-h e^{-\eh}\).
  
  Second we consider \(\mcE^{\text{low}}\), i.e. the event where \(\degseq_{b,h}-\degseq_{b,i}>k\) for any \(i\in [n]\diffS[h]\).
  By the condition in \hyperref[corollary:HDMatchingN:cond1]{(\ref*{corollary:HDMatchingN:cond1})} we have, \(\forall i\in [s] \setminus [h]\),
  \begin{align*}
    \degseq_{a,h} - \degseq_{a,{i}} &\geq (i-h)(k+4\max\acc{\eh,\sqrt{\varphi\cdot\eh}})(1-\varepsilon)^{-1}\\
    &\geq \pth{k+4\max\acc{(i-h)\eh,\sqrt{(i-h)\varphi\eh}}}(1-\varepsilon)^{-1}
    \intertext{and \(\forall i\in [n] \setminus [s]\),}
    \degseq_{a,h} - \degseq_{a,{i}} &\geq (s+1-h)(k+4t\max\acc{\eh,\sqrt{\varphi\cdot\eh}})(1-\varepsilon)^{-1}\\
    &\geq \pth{k+4\max\acc{(s+1-h)\eh,\sqrt{(s+1-h)\varphi\eh}}}(1-\varepsilon)^{-1}.
  \end{align*}
  By \hyperref[lemma:HDMatching2]{Lemma \ref*{lemma:HDMatching2}} we then have
  \begin{align*}
      \prob{\degseq_{a,h} - \degseq_{a,{i}} \leq k} \leq \exp(-\eh \min\{i-h,s+1-h\}).
  \end{align*}
  Then, by the union bound,
  \begin{align*}
    \prob{\overline{\mcE^{\text{low}}}}
    &\leq \sum_{i=h+1}^{s}e^{-\eh(i-h)} + \sum_{i=s+1}^n e^{-\eh(s+1-h)}\\
    &\leq \frac{e^{-\eh}}{1-e^{-\eh}} + (n-s)\frac{h}{n}e^{-\eh}
  \end{align*}
  Applying the union bound again we obtain
  \[
    \prob{\overline{\mcE^{\text{high}}} \vee \overline{\mcE^{\text{low}}}} \leq 
    (2h+1)e^{-\eh}/(1-e^{-\eh}).
    \]
\end{proof}

}


\bibliographystyle{ieeetr}
\bibliography{bibliography}

\begin{thebibliography}{10}

\bibitem{tian:tale}
Y.~Tian and J.~Patel, ``{TALE: A Tool for Approximate Large Graph Matching},''
  {\em Data Engineering, International Conference on}, vol.~0, pp.~963--972,
  2008.

\bibitem{zhang:sapper}
S.~Zhang, J.~Yang, and W.~Jin, ``{SAPPER: Subgraph Indexing and Approximate
  Matching in Large Graphs},'' {\em PVLDB}, vol.~3, no.~1, pp.~1185--1194,
  2010.

\bibitem{erkip}
F.~Shirani, S.~Garg, and E.~Erkip, ``An information theoretic framework for
  active de-anonymization in social networks based on group memberships,''
  2017.

\bibitem{community1}
E.~Onaran, S.~Garg, and E.~Erkip, ``Optimal de-anonymization in random graphs
  with community structure,'' in {\em 2016 50th Asilomar Conference on Signals,
  Systems and Computers}, pp.~709--713, IEEE, 2016.

\bibitem{community2}
L.~Fu, X.~Fu, Z.~Hu, Z.~Xu, and X.~Wang, ``De-anonymization of social networks
  with communities: When quantifications meet algorithms,'' 2017.

\bibitem{Singh:2008}
R.~Singh, J.~Xu, and B.~Berger, ``{Global alignment of multiple protein
  interaction networks with application to functional orthology detection},''
  {\em Proceedings of the National Academy of Sciences}, vol.~105, no.~35,
  pp.~12763--12768, 2008.

\bibitem{kuchaiev2010topological}
O.~Kuchaiev, T.~Milenkovi{\'c}, V.~Memi{\v{s}}evi{\'c}, W.~Hayes, and
  N.~Pr{\v{z}}ulj, ``{Topological network alignment uncovers biological
  function and phylogeny},'' {\em Journal of the Royal Society Interface},
  pp.~1341--1354, 2010.

\bibitem{malod2015graal}
N.~Malod{-}Dognin and N.~Pr{\v{z}}ulj, ``{{L-GRAAL:} Lagrangian graphlet-based
  network aligner},'' {\em Bioinformatics}, vol.~31, no.~13, pp.~2182--2189,
  2015.

\bibitem{saraph2014}
V.~Saraph and T.~Milenkovi{\'c}, ``{Magna: Maximizing accuracy in global
  network alignment},'' {\em Bioinformatics}, vol.~30, no.~20, pp.~2931--2940,
  2014.

\bibitem{Aladag:2013}
A.~E. Aladag and C.~Erten, ``{SPINAL: scalable protein interaction network
  alignment},'' {\em Bioinformatics}, vol.~29, no.~7, pp.~917--924, 2013.

\bibitem{whereforeArtThou}
L.~Backstrom, C.~Dwork, and J.~Kleinberg, ``Wherefore art thou r3579x?,'' {\em
  Communications of the {ACM}}, vol.~54, p.~133, dec 2011.

\bibitem{seeds}
S.~Ji, W.~Li, N.~Z. Gong, P.~Mittal, and R.~Beyah, ``On your social network
  de-anonymizablity: Quantification and large scale evaluation with seed
  knowledge,'' in {\em Proceedings 2015 Network and Distributed System Security
  Symposium}, Internet Society, 2015.

\bibitem{netflix}
A.~Narayanan and V.~Shmatikov, ``Robust de-anonymization of large sparse
  datasets,'' in {\em 2008 {IEEE} Symposium on Security and Privacy (sp 2008)},
  {IEEE}, may 2008.

\bibitem{witness}
N.~Korula and S.~Lattanzi, ``An efficient reconciliation algorithm for social
  networks,'' {\em Proceedings of the {VLDB} Endowment}, vol.~7, pp.~377--388,
  jan 2014.

\bibitem{mossel2019seeded}
E.~Mossel and J.~Xu, ``Seeded graph matching via large neighborhood
  statistics,'' in {\em Proceedings of the Thirtieth Annual ACM-SIAM Symposium
  on Discrete Algorithms}, pp.~1005--1014, SIAM, 2019.

\bibitem{percolation1}
L.~Yartseva and M.~Grossglauser, ``On the performance of percolation graph
  matching,'' in {\em Proceedings of the first {ACM} conference on Online
  social networks - {COSN} 2013}, {ACM} Press, 2013.

\bibitem{percolation2}
E.~Kazemi, S.~H. Hassani, and M.~Grossglauser, ``Growing a graph matching from
  a handful of seeds,'' {\em Proceedings of the {VLDB} Endowment}, vol.~8,
  pp.~1010--1021, jun 2015.

\bibitem{features}
K.~Henderson, B.~Gallagher, L.~Li, L.~Akoglu, T.~Eliassi-Rad, H.~Tong, and
  C.~Faloutsos, ``It's who you know,'' in {\em Proceedings of the 17th {ACM}
  {SIGKDD} international conference on Knowledge discovery and data mining -
  {KDD} 2011}, {ACM} Press, 2011.

\bibitem{pedarsani2013bayesian}
P.~Pedarsani, D.~R. Figueiredo, and M.~Grossglauser, ``A bayesian method for
  matching two similar graphs without seeds,'' in {\em Communication, Control,
  and Computing (Allerton), 2013 51st Annual Allerton Conference on},
  pp.~1598--1607, IEEE, 2013.

\bibitem{barak2018nearly}
B.~Barak, C.-N. Chou, Z.~Lei, T.~Schramm, and Y.~Sheng, ``(nearly) efficient
  algorithms for the graph matching problem on correlated random graphs,'' {\em
  arXiv preprint arXiv:1805.02349}, 2018.

\bibitem{mitzenmacher}
M.~Mitzenmacher and T.~Morgan, ``Reconciling graphs and sets of sets,'' 2017.

\bibitem{pedarsani}
P.~Pedarsani and M.~Grossglauser, ``On the privacy of anonymized networks,'' in
  {\em Proceedings of the 17th ACM SIGKDD international conference on Knowledge
  discovery and data mining}, pp.~1235--1243, ACM, 2011.

\bibitem{sigmetrics}
D.~Cullina and N.~Kiyavash, ``Improved achievability and converse bounds for
  erdos-renyi graph matching,'' in {\em Proceedings of the 2016 {ACM}
  {SIGMETRICS} International Conference on Measurement and Modeling of Computer
  Science - {SIGMETRICS} 2016}, {ACM} Press, 2016.

\bibitem{DBLP:journals/corr/abs-1711-06783}
D.~Cullina and N.~Kiyavash, ``Exact alignment recovery for correlated erdos
  renyi graphs,'' {\em CoRR}, vol.~abs/1711.06783, 2017.

\bibitem{isomorphism}
L.~Babai, P.~Erd{\H{o}}s, and S.~M. Selkow, ``Random graph isomorphism,'' {\em
  {SIAM} Journal on Computing}, vol.~9, pp.~628--635, aug 1980.

\bibitem{bollobas}
B.~Bollobas, {\em Random Graphs}.
\newblock Cambridge University Press, 2001.

\bibitem{feizi2016spectral}
S.~Feizi, G.~Quon, M.~Recamonde-Mendoza, M.~Medard, M.~Kellis, and
  A.~Jadbabaie, ``Spectral alignment of graphs,'' {\em arXiv preprint
  arXiv:1602.04181}, 2016.

\end{thebibliography}

\end{document}